

\IfFileExists{../EmanueleViolaDir.txt}{\def\EmanueleViolaDir{1}}{
\def\EmanueleViolaDir{0}}

\def\public{0}

\documentclass[letterpaper, 12pt]{article}


\usepackage{fullpage,amssymb,amsmath,amsthm,thm-restate}

\ifnum\EmanueleViolaDir=1
 \usepackage{../bibspacing}
\usepackage{../flexiblemathdisplay}
\else
  \usepackage{bibspacing}
\usepackage{flexiblemathdisplay}
\fi

\usepackage[usenames,dvipsnames]{pstricks}
\usepackage{epsfig}

\theoremstyle{definition} 
\newtheorem{theorem}{Theorem} 
\newtheorem{corollary}[theorem]{Corollary}
\newtheorem{claim}[theorem]{Claim}
\newtheorem{lemma}[theorem]{Lemma}
\newtheorem{theorem*}{Theorem}
\newtheorem{fact}[theorem]{Fact}

\newtheorem{definition}[theorem]{Definition}

\renewenvironment{proof}{\medskip \noindent {\bfseries Proof:}}{\hfill $\blacksquare$ \medskip}



\ifnum\public=1
 \newcommand{\en}[1]{}
 \newcommand{\eln}[1]{}
\else
 \newcommand{\en}[1]{\marginpar{#1}}
 \newcommand{\eln}[1]{\textbf{Note:} #1 \textbf{End note.}}
\fi

\newcommand{\poly}{\mathrm{poly}}
\renewcommand{\log}{\lg}

\newcommand{\rb}[1]{\left( #1 \right)} 

\newcommand{\zo}{\{0, 1\}}

\newcommand{\e}{\epsilon}

\newcommand{\sss}{\ensuremath{\mathrm{3SUM}}}
\newcommand{\xxx}{\ensuremath{\mathrm{3XOR}}}
\newcommand{\cxxx}{\ensuremath{\mathrm{C3XOR}}}

\newcommand{\sixs}{\ensuremath{\mathrm{6SUM}}}
\newcommand{\Sx}{\ensuremath{\mathrm{S^x}}}
\newcommand{\csss}{\ensuremath{\mathrm{C3SUM}}}
\newcommand{\fcliq}{\ensuremath{\mathrm{4Clique}}}

\begin{document}

\title{\sss , \xxx, Triangles }
\author{Zahra Jafargholi\thanks{Supported by NSF grant CCF-0845003. Email: \texttt\{zahra,viola\}@ccs.neu.edu}
   \and Emanuele Viola\footnotemark[1]}
\date{\today}

\maketitle

\begin{abstract}
We show that if one can solve $\sss$ on a set of size $n$
in time $n^{1+\e}$ then one can list $t$ triangles in a
graph with $m$ edges in time $\tilde
O(m^{1+\e}t^{1/3-\e/3})$. This is a reversal
of P{\v a}tra{\c s}cu's reduction from \sss~to listing
triangles (STOC '10).  Our result builds on and extends
works by the Paghs (PODS '06) and by Vassilevska and
Williams (FOCS '10).  We make our reductions deterministic using tools from pseudorandomness.

We then re-execute both P{\v a}tra{\c s}cu's reduction
and our reversal for the variant \xxx~of \sss~where
integer summation is replaced by bit-wise xor. As a
corollary we obtain that if \xxx~is solvable in linear
time but \sss~requires quadratic randomized time, or vice
versa, then the randomized time complexity of listing $m$
triangles in a graph with $m$ edges is $m^{4/3}$ up to a
factor $m^\alpha$ for any $\alpha > 0$.

\end{abstract}

\thispagestyle{empty}
\newpage

\setcounter{page}{1}

\section{Introduction}
The \sss~problem asks if there are three integers $a,b,c$
summing to $0$ in a given set of $n$ integers of
magnitude $\poly(n)$. This problem can be easily solved
in time $\tilde O(n^2)$. (Throughout, $\tilde O$ and
$\tilde \Omega$ hide logarithmic factors.) It seems
natural to believe that this problem also requires time
$\tilde \Omega(n^2)$, and this has been confirmed in some
restricted models.\cite{Erickson99,AilonC05} The
importance of this belief was brought to the forefront by
Gajentaan and Overmars who show that the belief implies
lower bounds for a number of problems in computational
geometry;\cite{GajentaanO95} and the list of such
reductions has grown ever since. Recently, a series of
exciting papers by Baran, Demaine, P{\v a}tra{\c s}cu,
Vassilevska, and Williams set the stage for, and
establish, reductions from \sss~to new types of problems
which are not defined in terms of
summation.\cite{BaranDP08,VassilevskaW09,PatrascuW10,Patrascu10}
In particular, P{\v a}tra{\c s}cu shows that if we can
list $m$ triangles in a graph with $m$ edges given as
adjacency list in time $m^{1.3 \bar 3 - \Omega(1)}$ then
we can solve \sss~in time $n^{2 -
\Omega(1)}$.\cite{Patrascu10}

To put this outstanding result in context we briefly
review the state-of-the-art on triangle detection and
listing algorithms. All the graphs in this paper are
undirected and simple. Given the adjacency list of a
graph with $m$ edges, Alon, Yuster, and Zwick show in
\cite{AlonYZ97} how to determine if it contains a
triangle in time $O(m^{2 \omega/(\omega+1)})$ where $\omega
< 2.376$ is the exponent of matrix multiplication. If
$\omega=2$ then the bound is $O(m^{1.3 \bar 3})$. For
listing \emph{all} triangles in a graph the best we can
hope for is time $\tilde O(m^{1.5})$, since the maximum
number of triangles in graphs with $m$ edges is
$\Theta(m^{1.5})$. There are algorithms that achieve time
$\tilde O(m^{1.5})$. (For example, we can first list the
$\le O(m \sqrt{m})$ triangles going through some node of
degree $\le \sqrt{m}$, and then the $\le O(m/\sqrt{m})^3
= O(m^{1.5})$ triangles using nodes of degree $>
\sqrt{m}$ only.) However, to list only $m$ triangles
conceivably time $\tilde O(m)$ suffices. In fact, Pagh
(personal communication 2011) points out an algorithm for
this problem achieving time $\tilde O(m^{1.5 -
\Omega(1)})$ and, assuming that the exponent of matrix
multiplication is $2$, time $\tilde O(m^{1.4})$.

\subsection{Our results} \label{s-our-results}

The main result in this paper is a ``reversal'' of P{\v
a}tra{\c s}cu's aforementioned reduction from \sss~to
listing triangles, see Corollary \ref{c-revpatrascu}
below.  In particular, we show that solving \sss~in time
$\tilde O(n^{1+\alpha})$ for some $\alpha < 1/15$ would
improve the aforementioned Pagh's triangle-listing
algorithm. (Recall the latter has complexity $\tilde
O(m^{1.4})$ assuming $\omega=2$.)  Before formally
stating this result we would like to provide some
motivation (in addition to the obvious one of filling the
landscape of reductions).

The motivation comes from the study of variants of
\sss~over other domains such as finite groups. Building on \cite{PatrascuW10}, the
paper \cite{BhattacharyyaIWX11} links such
variants to the Exponential Time Hypothesis \cite{IPZ01}
when the number of summands is ``large,'' in particular,
bigger than 3.  By contrast, we focus on the problem
which we call \xxx~and which is like \sss~except that
integer summation is replaced with bit-wise xor. So one
can think of \xxx~as asking if a given $n \times O(\lg
n)$ matrix over the field with two elements has a linear
combination of length $3$. This problem is likely less
relevant to computational geometry, but is otherwise
quite natural. Similarly to $\sss$, $\xxx$ can be solved
in time $\tilde O(n^2)$, and it seems natural to
conjecture that \xxx~requires time $\tilde \Omega(n^2)$.
But it is interesting to note that if we ask for
\emph{any} number (as opposed to $3$) of elements that
sums to $0$ the difference in domains translates in a
significant difference in complexity: SUBSET-SUM is
NP-hard, whereas SUBSET-XOR can be solved efficiently via
Gaussian elimination. On the other hand, SUBSET-XOR
remains NP-hard if the number of elements that need to
sum to $0$ is given as part of the input.\cite{Vardy97}

In light of this, it would be interesting to relate the
complexities of \sss~and \xxx. For example, it would be
interesting to show that one problem is solvable in time
$n^{2-\Omega(1)}$ if and only if the other is. Less
ambitiously, the weakest possible link would be to
exclude a scenario where, say, \sss~requires time $\tilde
\Omega(n^2)$ while \xxx~is solvable in time $\tilde
O(n)$. We are not even able to exclude this scenario, and
we raise it as an open problem.

However we manage to spin a web of reductions around
\sss, \xxx, and various problems related to triangles, a
web that extends and complements the pre-existing web.
One consequence is that the only way in which the
aforementioned scenario is possible is that listing $m$
triangles requires exactly $m^{1.3 \bar 3}$ up to
lower-order factors.

\begin{corollary} \label{c-weakestlink}
Suppose that \sss~requires randomized time $\tilde
\Omega(n^2)$ and \xxx~is solvable in time $\tilde O(n)$,
or vice versa. Then, given the adjacency list of a graph
with $m$ edges and $z$ triangles (and $O(m)$ nodes), the
randomized time complexity of listing $\min\{z, m\}$
triangles is $m^{1.3 \bar 3}$ up to a factor $m^\alpha$
for any $\alpha > 0$.
\end{corollary}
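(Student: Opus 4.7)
The plan is to pin the randomized complexity of listing $\min\{z,m\}$ triangles between matching upper and lower bounds coming from the four reductions announced in Section \ref{s-our-results}. In each of the two scenarios of the corollary, one of \sss, \xxx~is \emph{easy} ($\tilde O(n)$ time) and the other is \emph{hard} ($\tilde \Omega(n^2)$ randomized time), and for each of the two problems the paper supplies both a P{\v a}tra{\c s}cu-style reduction \emph{from} the problem \emph{to} listing triangles and a reversal \emph{from} listing triangles \emph{to} the problem.

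For the \textbf{upper bound}, I apply the reversal to the easy problem. Setting $\e = 0$, the reversal stated in the abstract turns an $\tilde O(n)$-time algorithm for the easy problem into an algorithm listing $t$ triangles in a graph with $m$ edges in time $\tilde O(m \cdot t^{1/3})$, which is $\tilde O(m^{4/3})$ for every $t \le m$; in particular this handles $t = \min\{z,m\}$, since the reversal simply halts if fewer than $t$ triangles exist.

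For the \textbf{lower bound}, I contrapose P{\v a}tra{\c s}cu's reduction (or its re-executed version for \xxx, depending on which problem is hard). If one could list $\min\{z,m\}$ triangles in time $\tilde O(m^{4/3 - \alpha})$ for some fixed $\alpha > 0$, then the reduction would yield a randomized algorithm for the hard problem running in time $n^{2 - \Omega(\alpha)}$, contradicting the $\tilde \Omega(n^2)$ hypothesis. (The reduction produces instances with $z \ge m$, so hardness of listing $\min\{z,m\}$ triangles is what is needed.)

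The main obstacle is controlling the lower-order slack: each application of the reversal introduces an $m^{O(\e)}$ factor and each application of P{\v a}tra{\c s}cu's reduction introduces polylogarithmic overhead, and these must be absorbed into the single $m^\alpha$ slack in the statement by choosing internal parameters of the reductions (the value of $\e$, the block sizes) small with respect to $\alpha$. Once this bookkeeping is done, the two scenarios of the corollary are symmetric, since the paper furnishes both the original and the re-executed versions of each of the two reductions.
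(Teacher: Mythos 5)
Your proposal is correct and is essentially the paper's own argument: the paper notes that the corollary "follows immediately" from combining P{\v a}tra{\c s}cu's reduction from \sss~to listing triangles, the re-execution of that reduction for \xxx~(Theorem \ref{t-patrascuxxx}), and the reversal (Corollary \ref{c-revpatrascu}), which is precisely your upper-bound/lower-bound pairing of the easy and hard problems. Your handling of the slack (choosing $\e$ small relative to $\alpha$ rather than literally $\e=0$, since the reversal is stated for $\e>0$) is the right bookkeeping.
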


We now overview our reductions. First we build on and
extend a remarkable reduction \cite{WilliamsW10} by
Vassilevska and Williams from listing triangles to
detecting triangles. Their reduction worked on adjacency
matrixes, and a main technical contribution of this work
is an extension to adjacency lists which is needed in our
subquadratic context.

\begin{restatable}{lemma}{secondlemmarestate}
\label{lemma-second} Suppose given the adjacency list of
a graph with $m$ edges and $n=O(m)$ nodes,  one can
decide if it is triangle-free in time $m^{1+\e}$ for
$\e>0$.
Then, given the adjacency list of a graph $G$ with $m$
edges, $n=O(m)$ nodes and $z$ triangles and a positive
integer $t$ one can list $\min\{t,z\}$ triangles in $G$,
in time $\tilde O(m^{1+\e}t^{1/3-\e/3})$.
\end{restatable}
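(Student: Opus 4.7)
My plan is to adapt the Vassilevska--Williams triangle-listing framework from the adjacency-matrix setting to the adjacency-list setting. In the matrix setting, the induced subgraph on any three vertex groups is free to inspect and the detection cost depends only on vertex counts, so the key new challenge is to ensure that each sub-problem contains few \emph{edges}, not merely few vertices.

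First I would partition $V$ into $k=\lceil t^{1/3}\rceil$ groups $V_1,\ldots,V_k$ that are \emph{degree-balanced}, i.e.\ the total degree of vertices inside each $V_i$ is $O(m/k)$; a greedy round-robin assignment of vertices sorted by degree suffices. Triangles whose three vertices do not lie in three distinct groups are handled by the standard tripartite-copy trick (triplicate each vertex, keep edges only across copies), at a constant-factor cost. Then, for each of the $\binom{k}{3}=O(k^{3})$ triples $(i,j,l)$, I extract the induced subgraph $G_{ijl}$ on $V_i\cup V_j\cup V_l$ and invoke the detection oracle on it. Degree balance gives $m_{ijl}:=|E(G_{ijl})|\le O(m/k)$, while each edge of $G$ lies in at most $k-2$ triples, so $\sum_{ijl}m_{ijl}\le O(mk)$. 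By convexity,
\[
\sum_{i<j<l}\tilde O\bigl(m_{ijl}^{1+\epsilon}\bigr)\;\le\;\bigl(\max_{ijl}m_{ijl}\bigr)^{\epsilon}\!\cdot\!\sum_{ijl}m_{ijl}\;\le\;(m/k)^{\epsilon}\!\cdot\! O(mk)\;=\;\tilde O\bigl(m^{1+\epsilon}\,t^{(1-\epsilon)/3}\bigr),
\]
which matches the target bound for the top-level detection phase.

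For every triple whose oracle call answers yes, I recursively pinpoint a triangle by halving each of $V_i,V_j,V_l$, re-detecting on the eight resulting sub-triples, and descending only where detection again answers yes, until an explicit triangle is isolated. I then output it, remove one of its edges so it is never re-listed, and continue until $\min\{t,z\}$ triangles have been output. Applying the same convexity bookkeeping level by level---sub-triple edge counts shrink geometrically while the number of active sub-triples per level is at most $\min(8^{\ell},O(z\log n))$---keeps the pinpointing phase within the same $\tilde O(m^{1+\epsilon}\,t^{(1-\epsilon)/3})$ bound, up to logarithmic factors absorbed in $\tilde O(\cdot)$.

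The main obstacle, and the novelty over the matrix-based reduction, is this edge-count control: in the list model one must both build each $G_{ijl}$ on the fly and guarantee density $O(m/k)$ in each sub-triple, which is exactly what the degree-balanced partition enables. Making this partition deterministic, as the abstract advertises, is where the paper's pseudorandomness-based derandomization enters (a random hashing of vertices achieves the balance in expectation, and a suitable PRG supplies the bits deterministically without blowing up the running time). The remaining bookkeeping---avoiding duplicate triangles via edge marking, and re-detecting after each edge removal---is standard and does not affect the asymptotic bound.
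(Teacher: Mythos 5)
Your top-level decomposition into $k=\lceil t^{1/3}\rceil$ degree-balanced groups is a genuinely different (and appealing) route from the paper's, and the convexity bound $\sum_{ijl} m_{ijl}^{1+\e}\le (m/k)^{\e}\cdot O(mk)=\tilde O(m^{1+\e}t^{(1-\e)/3})$ is correct \emph{if} every group really has total degree $O(m/k)$. But that premise already hides the first gap: a greedy round-robin only guarantees per-group degree $O(m/k)+\max_a \deg(a)$, so a single vertex of degree $\Theta(m)$ destroys the balance. You must first strip out (and separately handle) high-degree vertices; this is exactly the paper's Stage One, where all triangles through nodes of degree $>\delta m$ are listed directly in $\tilde O(m/\delta)$ time before any partitioning. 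The same omission is fatal to your inner recursion: you assert that "sub-triple edge counts shrink geometrically" under random halving because the balance holds "in expectation," but expectation is not concentration. The edge indicators $X_e$ for a random vertex-partition are not even pairwise independent (two edges sharing a vertex are correlated), so no off-the-shelf tail bound applies. The paper's Claim~\ref{claim-1} makes the second-moment argument go through precisely \emph{because} high-degree vertices were removed first, so that the number $\rho$ of edge pairs sharing a vertex is at most $\delta m^2$; only then does an (almost $4$-wise independent) partition give all $8$ subgraphs at most $m(1/4+\gamma)$ edges for some seed. Without this step your recursion has no guaranteed size decrease.

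The second, independent gap is in your extraction phase. Isolating one triangle per descent, deleting an edge, and re-detecting costs at least one detection call on a graph of $\Theta(m/k)$ edges per triangle output, i.e.
\[
t\cdot (m/k)^{1+\e} \;=\; m^{1+\e}\,t^{\,1-(1+\e)/3} \;=\; m^{1+\e}\,t^{(2-\e)/3},
\]
which exceeds the target $m^{1+\e}t^{(1-\e)/3}$ by a factor $t^{1/3}$. The correct accounting (as in Vassilevska--Williams and in the paper) branches into \emph{all} sub-triples that test positive, caps the total number of live subproblems at $6t$ (not at $O(z\log n)$ as you wrote --- $z$ may be far larger than $t$), and brute-forces all triangles once subproblems reach constant size; since each triangle survives in exactly one leaf, no edge deletion or re-detection is ever needed. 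With the level-$\ell$ subproblem count bounded by $\min\{8^{\ell},6t\}$ and sizes shrinking like $(1/4+\gamma)^{\ell}$, the sum is dominated at $\ell=\lg t^{1/3}$ and gives the stated bound. So the skeleton of your argument is salvageable, but the two ingredients you treat as routine --- high-degree removal and the branching-with-a-cap extraction --- are precisely the load-bearing parts of the proof.
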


For context, Pagh shows a reduction from finding the set
of edges involved in some triangle to listing triangles,
see \cite[\S6]{Amossen11}.

Next we move to reductions between \sss~and detecting
triangles. The Paghs \cite[\S 6]{PaghP06} give an
algorithm to ``compute the join of three relations [...]
where any pair has a common attribute not shared by the
third relation.'' One component of their algorithm can be
phrased as a randomized reduction from detecting
(tripartite, directed) triangles to \sss. The same
reduction works for \xxx. Here our main technical
contribution is to exhibit a deterministic reduction,
relying on the explicit design construction by Gutfreund
and Viola \cite{GuV04}.

\begin{restatable}{lemma}{firstlemmarestate}
\label{lemma-first} Suppose that one can solve $\sss$ or
$\xxx$ on a set of size $n$ in time $\tilde O(n^{1+\e})$
for $\e\geq 0$. Then, given the adjacency list of a graph
with $m$ edges, $n=O(m)$ nodes,  one can decide if it is
triangle-free in time $\tilde O(m^{1+\e})$.
\end{restatable}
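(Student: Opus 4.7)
The plan is to follow the Pagh--Pagh template, building a hash-based reduction from triangle detection to \sss/\xxx, and then to derandomize it with the Gutfreund--Viola explicit design. First I would reduce detection in the input graph $G$ to the tripartite, directed version: take three copies $A,B,C$ of $V(G)$ and, for every undirected edge $\{u,v\}$ of $G$ and every ordered pair of parts, insert the corresponding directed edge between the two copies, yielding edge sets $E_{AB},E_{BC},E_{AC}$ of total size $O(m)$. Any triangle of $G$ appears as a directed tripartite triangle, and conversely, so it suffices to solve the tripartite case.

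Next I would pick hash functions $h_A,h_B,h_C : V(G)\to [M]$ (for \xxx, into $\zo^k$) and build a \sss/\xxx instance of size $3m$ by associating with each edge $(a,b)\in E_{AB}$ the element $h_A(a)-h_B(b)$, with $(b,c)\in E_{BC}$ the element $h_B(b)-h_C(c)$, and with $(a,c)\in E_{AC}$ the element $h_C(c)-h_A(a)$; to force any sum-zero triple to contain exactly one edge from each class, shift the three classes by three large distinct offsets summing to zero. A genuine triangle $(a,b,c)$ telescopes to $0$ and so yields a solution. A spurious solution is a triple of edges not forming a triangle for which $(h_A(a_1)-h_A(a_3))+(h_B(b_2)-h_B(b_1))+(h_C(c_3)-h_C(c_2))=0$, that is, a fixed nontrivial linear combination of hash values vanishes. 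With independent uniformly random hashes into a range of size $M=m^{3+c}$, each of the at most $m^3$ such bad linear tests vanishes with probability $\le 1/M$, and a union bound kills all spurious solutions with probability $1-o(1)$, giving the randomized version of the Paghs.

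To make the reduction deterministic I would instantiate the hashes from the explicit Gutfreund--Viola design, which provides a family of $\poly(\log m)$ hash functions with the property that for any prescribed polynomial family of linear tests, some member of the family falsifies them all; we then run the above reduction once per hash in the family and answer ``triangle'' iff some of the resulting \sss/\xxx instances is satisfiable. Each iteration constructs an instance in $\tilde O(m)$ time and solves it in $\tilde O(m^{1+\e})$ by hypothesis, so the total time is $\tilde O(m^{1+\e})\cdot \poly(\log m) = \tilde O(m^{1+\e})$. The main obstacle is the derandomization step: one must verify that the Gutfreund--Viola design really does fool the specific family of six-variable linear tests that arise here with a seed of length $O(\log m)$, so that the resulting hash family has only polylogarithmic size; the reduction itself and the time accounting are then routine. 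The \xxx{} case is handled identically, replacing subtraction by bitwise xor and the offsets trick by xoring with three fixed pairwise-independent strings summing to zero.
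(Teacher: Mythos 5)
Your randomized reduction is essentially the paper's (and the Paghs'): assign values to the nodes, map each edge to a difference of node values, observe that a triangle telescopes to zero while a spurious triple forces a fixed nontrivial linear combination of at most six node values to vanish, and union-bound over the $O(m^3)$ bad triples. The genuine gap is in the derandomization, which is precisely the step the paper singles out as its technical contribution here. The Gutfreund--Viola object is not a polylogarithmic-size family of hash functions with a hitting property against linear tests; it is a single explicit \emph{combinatorial design}: $m$ subsets $S_1,\dots,S_m$ of a universe of size $O(\log m)$, each of size $c^2\log m$, with pairwise intersections at most $2c\log m$, constructible in time $\tilde O(m)$. The paper uses it once, with no enumeration and no error: node $a$ is assigned the number $x_a$ whose base-$10$ representation (base $2$ for \xxx) is the characteristic vector of $S_a$. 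Correctness is then a zero-error combinatorial argument: if $(x_a-x_{a'})+(x_b-x_{b'})+(x_c-x_{c'})=0$ but some $x$ does not appear exactly once with each sign, then after cancelling equal terms one side of the equation is nonzero on a set of $c^2\log m$ digits, of which the other five numbers can cover at most $5\cdot 2c\log m$ (and there are no carries between digits), a contradiction once $c$ is chosen so that the intersection bound is below $1/5$ of the set size. Hence every \sss/\xxx{} solution really comes from a triangle.

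Even granting your hypothetical hitting family, your decision rule is backwards: if \emph{some} member of the family falsifies all spurious tests, you must answer ``triangle'' iff \emph{all} instances are satisfiable (genuine triangles survive every hash by linearity, and the good member certifies the triangle-free case); answering ``iff some instance is satisfiable'' yields a false positive whenever a single bad hash admits a spurious collision on a triangle-free graph. To repair the proof, drop the family enumeration entirely and use the single design-based assignment together with the intersection argument above.
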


In particular, this shows that solving either \sss~or
\xxx~in time $O(m^{2 \omega/(\omega+1) - \Omega(1)})$,
where $\omega$ is the exponent of matrix multiplication,
would improve the aforementioned triangle-detection
algorithm in \cite{AlonYZ97}.

The combination of the previous two lemmas yields what we
mentioned at the beginning of \S\ref{s-our-results}: a
reversal of P{\v a}tra{\c s}cu's reduction from \sss~to
listing triangles.

\begin{restatable}{corollary}{mainthm}[Reverse P{\v a}tra{\c s}cu]
\label{c-revpatrascu} Suppose that one can solve $\sss$
or $\xxx$ on a set of size $n$ in time $n^{1+\e}$ for $\e
> 0$. Then, given the adjacency list of a graph $G$ with
$m$ edges, $n=O(m)$ nodes and $z$ triangles a positive
integer $t$, one can list $\min\{t,z\}$
triangles in $G$ in time $\tilde O(m^{1+\e}t^{1/3-\e/3})$.
\end{restatable}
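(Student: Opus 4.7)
The plan is to obtain this corollary by a direct composition of Lemma \ref{lemma-first} with Lemma \ref{lemma-second}. In the first step I would take the hypothesized algorithm solving \sss~or \xxx~in time $n^{1+\e}$ on sets of size $n$ (which of course also runs in time $\tilde O(n^{1+\e})$ as required) and apply Lemma \ref{lemma-first} to it. This yields a triangle-freeness decision procedure that, given the adjacency list of a graph with $m$ edges and $n = O(m)$ nodes, runs in time $\tilde O(m^{1+\e})$.

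In the second step I would feed this decision procedure into Lemma \ref{lemma-second}, whose hypothesis asks for a triangle-freeness algorithm in time $m^{1+\e}$ (without the $\tilde O$). The polylogarithmic factors produced by Lemma \ref{lemma-first} therefore need to be accounted for: one can either inspect the proof of Lemma \ref{lemma-second} and check that an extra $\mathrm{polylog}(m)$ factor in its hypothesis only costs an extra $\mathrm{polylog}(m)$ factor in its conclusion (which is absorbed by the $\tilde O$), or, more cheaply, one can increase the exponent from $\e$ to $\e' = \e + o(1)$ so that $\tilde O(m^{1+\e}) \le m^{1+\e'}$ and then apply Lemma \ref{lemma-second} with $\e'$, again absorbing the negligible change into the $\tilde O$ of the final bound. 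Either way, Lemma \ref{lemma-second} produces an algorithm listing $\min\{t,z\}$ triangles in $G$ in time $\tilde O(m^{1+\e} t^{1/3-\e/3})$, which is exactly the claimed bound.

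The only delicate point in the whole argument is the bookkeeping around polylogarithmic factors across the composition, since the two lemmas use slightly different conventions; no additional idea is needed beyond the lemmas themselves. In that sense the ``main obstacle'' here is purely cosmetic, and the corollary is essentially immediate.
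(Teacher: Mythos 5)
Your proposal is correct and matches the paper exactly: the corollary is obtained by composing Lemma \ref{lemma-first} with Lemma \ref{lemma-second}, and your handling of the polylogarithmic-factor mismatch between the two lemmas' hypotheses is a reasonable way to make the composition airtight.
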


Finally, we re-execute P{\v a}tra{\c s}cu's reduction for
\xxx~instead of \sss. Our execution also avoids some
technical difficulties and so is a bit simpler; it
appeared first in the manuscript \cite{Viola-xxx}.

\begin{restatable}{theorem}{patrascuxxx}
\label{t-patrascuxxx}
Suppose that given the adjacency list of a graph with $m$
edges and $z$ triangles (and $O(m)$ nodes) one can list
$\min\{z, m\}$ triangles in time $m^{1.3\bar3 - \e}$ for
a constant $\e > 0$. Then one can solve $\xxx$ on a set
of size $n$ in time $n^{2-\e'}$ with error $1\%$ for a
constant $\e' > 0$.
\end{restatable}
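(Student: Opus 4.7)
The proof adapts P{\v a}tra{\c s}cu's reduction from \sss~to triangle listing \cite{Patrascu10}, replacing integer addition with bit-wise XOR. The $\mathbb{F}_2$ setting is substantially cleaner: a random $\mathbb{F}_2$-linear hash $h$ satisfies $h(x)\oplus h(y)\oplus h(z)=h(x\oplus y\oplus z)$ identically, so any \xxx~solution is automatically hash-consistent. This removes the almost-linear hashing and carry-chain analysis that P{\v a}tra{\c s}cu employs in the integer case.

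The first step is to pick a random $\mathbb{F}_2$-linear hash $h:\mathbb{F}_2^d\to\mathbb{F}_2^k$ with $R:=2^k$ of order $n^{2/3}$, and to partition the input $S$ into buckets $S_u:=\{x\in S:h(x)=u\}$. A standard pairwise-independence argument shows that, with probability at least $0.99$, every bucket has size $\tilde O(n/R)=\tilde O(n^{1/3})$; the complementary event is the source of the $1\%$ error stated in the conclusion. By linearity of $h$, any \xxx~triple lives in a bucket triple whose XOR is zero.

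Following P{\v a}tra{\c s}cu, the search for a \xxx~triple is then encoded as triangle listing on a tripartite graph (or a small family of them) whose vertices are indexed by a two-level bucket decomposition: a first-level bucket paired with a second-level sub-bucket, so that each vertex represents a controlled-size batch of elements. Edges are placed between sub-bucket batches from different parts precisely when their contents are consistent with a candidate XOR-zero triple involving some element of the third part. The sub-bucket parameters are tuned so that the graph has $m=\tilde O(n)$ edges and $O(m)$ triangles, and each triangle can be verified as a genuine \xxx~triple in $\tilde O(1)$ time via a hash-table lookup of $a\oplus b\oplus c$ in $S$. Iterating the triangle-listing subroutine over $\tilde O(n^{2/3})$ sub-instances (one per first-level ``third bucket'') then has total cost $\tilde O(n^{2/3})\cdot m^{4/3-\e}=\tilde O(n^{2-\e})$, which is $n^{2-\e'}$ for a suitable constant $\e'>0$.

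The main technical obstacle, inherited from P{\v a}tra{\c s}cu's original reduction, is the parameter tuning of the two-level bucket construction: the sub-bucket sizes must be set so that simultaneously (i) the edge count is $\tilde O(n)$, (ii) the triangle count is $O(m)$ so that $\min\{z,m\}=z$ and no candidate triple is missed, and (iii) every actual \xxx~solution produces at least one triangle with high probability, after (if necessary) $O(\log n)$ independent rehashings of $h$ to amplify the per-hash success probability above the $1\%$ budget. The linearity of $h$ over $\mathbb{F}_2$ makes (iii) automatic conditional on the bucket-size event of the first step, so the construction is in fact somewhat simpler than in the \sss~case, as advertised.
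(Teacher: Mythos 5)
Your plan correctly identifies the two simplifications that $\mathbb{F}_2$-linearity buys (exact linearity of $h$, no carries), but two of its load-bearing steps do not hold as stated. First, the claim that ``a standard pairwise-independence argument shows that with probability at least $0.99$ every bucket has size $\tilde O(n/R)$'' is wrong: the linear hash of Definition \ref{d-hash} is only pairwise independent, so Chebyshev bounds the overload probability of a \emph{single} bucket by roughly $R/n$, and a union bound over the $R=n^{2/3}$ buckets gives $R^2/n = n^{1/3}$, which is vacuous. This is exactly the obstacle the paper flags (``our hash function is only pairwise independent'') and circumvents with the Baran--Demaine--P{\v a}tra{\c s}cu hashing lemma (Lemma \ref{lemma-BaranDP}), which bounds only the \emph{expected number of elements} falling into overloaded buckets; those elements are then stripped out and handled by a separate $\tilde O(Rn)$ brute-force pass, and the reduction proceeds only on the non-overloaded buckets by enumerating positions $i,j,k \le 3n/R$ within buckets. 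You need this (or some substitute) rather than a max-load bound.

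Second, the heart of the reduction --- the tripartite graph with $\tilde O(n)$ edges, $O(m)$ triangles, and a triangle for every XOR-zero triple --- is left as a black box precisely where the difficulty lies. There are $n^2$ candidate pairs $(x,y)$, each determining $z=x\oplus y$, so some mechanism must cut the number of surviving candidates (equivalently, triangles) down to the total edge budget; linearity of $h$ alone does not do this, since every genuine pair and a $1/R$ fraction of spurious ones pass the bucket test. In the paper this is achieved in two steps you omit: (i) a reduction of \xxx~to the convolution problem \cxxx~(Lemma \ref{lemma-xxx-cxxx}), whose index arithmetic $A[i]+A[j]=A[i+j]$ is what makes the third coordinate of a candidate determined by the graph structure, at the cost of iterating over $t^3$ in-bucket position triples; and (ii) in Lemma \ref{lemma-listing-3xor-hard}, a second hash applied to the \emph{values} $A[\cdot]$ with range $\sqrt n$, so that the expected number of false-positive triangles is $\le n^{1.5}$, matching the edge count $m=3n^{1.5}$ (not $\tilde O(n)$) and guaranteeing $\min\{z,m\}=z$. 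Without the convolution detour or the value-level hash, your conditions (i)--(iii) are not consequences of the construction but restatements of what needs to be proved.
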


\subsection{Techniques}

We now explain how we reduce listing $t$ triangles in a
graph to detecting triangles. First we recall the
strategy \cite{WilliamsW10} by Vassilevska and Williams
that works in the setting of adjacency matrixes, as
opposed to lists.

Without loss of generality, we work with a tripartite
graph with $n$ nodes per part. Their recursive algorithm
proceeds as follows. First, divide each part in two
halves of $n/2$ nodes, then recurse on each of the $8$
subgraphs consisting of triples of halves. Note edges are
duplicated, but triangles are not. One uses the
triangle-detection algorithm to avoid recursing on
subproblems that do not contain triangles. The most
important feature is that one keeps track, throughout the
execution of the algorithm, of how many subproblems have
been produced, and if this number reaches $t$ one stops
introducing new subproblems.

We next explain how to extend this idea to adjacency
lists. At a high level we use the same recursive approach
based on partitions and keeping track of the total number
of subproblems. However in our setting partitioning is
more difficult, and we resort to \emph{random
partitioning}. Each part of the graph at hand is
partitioned in $2$ subsets by flipping an unbiased coin
for each node. If we start with a graph with $m$ edges,
each of the $8$ subgraphs (induced by the $8$ triples of
subsets) expects $m/4$ edges. We would like to guarantee
this result with high probability simultaneously for each
of the $8$ subgraphs. For this goal we would like to show
that the size of each of the $8$ subgraphs is
concentrated around its expectation. Specifically, fix a
subgraph and let $X_e$ be the indicator random variable
for edge $e$ being in the subgraph. We would like to show
\[ \Pr\left[\sum_e X_e > m/4 + \gamma m\right] < 1/8 \hspace{1cm} (\star) \]
for some small $\gamma$. By a union bound we can then
argue about all the $8$ subgraphs simultaneously.

Assuming $(\star)$ we conclude the proof similarly to
\cite{WilliamsW10} as follows, setting for simplicity
$\gamma = 0$. Each recursive step reduces the problem
size by a factor $4$. Let $s_i$ be the number of
subproblems at level $i$ of the recursion. The running
time of the algorithm is of the order of
\[ \sum_{i \le \lg_4 m} s_i T(m/4^i) < \sum s_i
m^{1+\e}/4^i,\] where $T(x) = x^{1+\e}$ is the time of
the triangle detection algorithm. Since we recurse on
$\le 8$ subproblems we have $s_i \le 8^i$; since we make
sure to never have more than $t$ subproblems we have $s_i
\le t$. Picking a breaking-point $\ell$ we can write the
order of the time as
\[ m^{1+\e} \rb{\sum_{i \le \ell } 8^i /4^i + \sum_{i >
\ell}^{\lg_4 m} t /4^\ell} = m^{1+\e} \tilde O(2^\ell +
t/4^\ell)\] which is minimized to $\tilde O(m^{1+\e} t^{1/3})$ for $\ell = \lg t^{1/3}$.

We now discuss how we guarantee $(\star)$. The obstacle
is that the variables $X_e$ are not even pairwise
independent; consider for example two edges sharing a
node. We overcome this obstacle by introducing a first
stage in the algorithm in which we list all triangles
involving at least one node of high degree ($>
\delta m$), which only costs time $\tilde O(m)$. We then
remove these high-degree nodes. What we have gained is
that now most pairs of variables $X_e, X_{e'}$ are
pairwise independent. This lets us carry through an
argument like Chebychev's inequality and in turn argue
about concentration around the expectation $m/4$.

To obtain a deterministic reduction we choose the
partition from an almost 4-wise independent sample space
\cite{NaN93,AGHP92}.

\paragraph{Organization.}
In \S\ref{ExistTriangle-reducing-to-sss} we reduce
detecting triangles to \sss~and \xxx. In
\S\ref{ListingTriangle-reducing-to-ExistTriangle} we give
the reduction from listing to detecting triangles in a
graph. In the appendix,
\S\ref{Reducing-xxx-to-ListingTriangles} through
\S\ref{s-reducing-to-triangles}, we give the reduction
from \xxx~to listing triangles.  In another section of
the appendix, \S\ref{s-4-clique-6-sum}, we show how to
reduce 4-clique to \sixs~over the the group
$\mathbb{Z}_3^t$ -- thus hinting at a richer web of
reductions.

Note that Corollary \ref{c-weakestlink} follows
immediately from the combination of: P{\v a}tra{\c s}cu's
reduction from \sss~to listing triangles
\cite{Patrascu10}, our reversal (Corollary
\ref{c-revpatrascu}), and our re-execution for
\xxx~(Theorem \ref{t-patrascuxxx}).

\section{Reducing detecting triangles to \sss~and \xxx }
\label{ExistTriangle-reducing-to-sss}

In this section we prove Lemma \ref{lemma-first},
restated next.

\firstlemmarestate*

Recall that all graphs in this paper are undirected (and
simple). Still, we use ordered-pair notation for edges. A
triangle is a set of $3$ distinct edges where each node
appears twice, such as $(a,b), (c,b), (a,c)$.

The deterministic reduction relies on combinatorial
designs, family of $m$ subsets of a small universe with
small pairwise intersections. Specifically we need the
size of the sets to be linear in the universe size, and
the bound on the intersection a constant fraction of the
set size. Such parameters were achieved in \cite{NiW94}
but with a construction running in time exponential in
$m$. We use the different construction computable in time
$\tilde O(m)$ by Gutfreund and Viola \cite{GuV04}.

\begin{lemma}{\cite{GuV04}} \label{lemma-design}
For every constant $c>1$ and large enough $m$ there is a
family of $m$ sets $S_i, i = 1,\ldots,m$ such that:

1) $|S_i| = c^2 \log m$,

2) $|S_i \cap S_j| \le 2c \log m$ for $i \ne
j$,

3) $S_i \subseteq [50 \cdot c^3 \log m]$,

4) the family may be constructed in time $\tilde O(m)$.
\end{lemma}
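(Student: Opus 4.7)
The plan is to realize each $S_i$ as the support of a codeword in a binary constant-weight error-correcting code. For two codewords $u, v$ of common Hamming weight $w$ and distance $d(u,v)$, the supports satisfy $|\mathrm{supp}(u) \cap \mathrm{supp}(v)| = w - d(u,v)/2$, so it suffices to exhibit an efficiently constructible code with $\ge m$ codewords in $\{0,1\}^N$, each of weight exactly $w = c^2 \log m$, block length $N \le 50 c^3 \log m$, and pairwise support intersection at most $2c \log m$. The supports of these codewords then satisfy items (1)--(3).

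I would build such a code by concatenation. As the outer code take a Reed--Solomon code over $\mathbb{F}_q$ with $q = \Theta(c^2 \log m / \log\log m)$ (the smallest prime power in this range, found by trial division in time $\poly(\log m)$) and dimension $k = \lceil \log m/\log q \rceil = \Theta(\log m/\log\log m)$: this gives $\ge m$ codewords of length $q$, and any two distinct codewords, being values of polynomials of degree $<k$, agree in at most $k-1$ positions. As the inner code take a constant-weight binary code $C_{\mathrm{in}} \colon \mathbb{F}_q \to \{0,1\}^{n_0}$ of block length $n_0 = \Theta(c \log\log m)$ and weight $w_0 = \Theta(\log\log m)$ chosen so that $q w_0 = c^2 \log m$, with pairwise intersection of inner codewords at most $w_0/c$. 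Such an inner code exists by a Gilbert--Varshamov counting argument and can be found by brute force in time $\poly(\log m)$ since its whole description fits in $\poly(\log m)$ bits. The concatenated code has block length $N = q n_0 = \Theta(c^3 \log m)$ (with constants tuned to fit within $50 c^3 \log m$) and weight $q w_0 = c^2 \log m$.

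The main technical obstacle is verifying the intersection bound (2) with the stated constant. For two distinct concatenated codewords the $\le k-1$ outer positions where the blocks coincide contribute at most $w_0$ each to the support intersection, while the $\ge q - k + 1$ positions where they differ contribute at most $w_0/c$ each; hence the total is at most $(k-1) w_0 + q w_0/c = O(\log m) + c \log m$, which is at most $2c \log m$ for $m$ large enough since $c > 1$. Matching the precise constants $50 c^3$ and $2c$ in the statement requires choosing the hidden constants in $q$ and $n_0$ with a little slack; this is where one invokes the explicit construction of \cite{GuV04}. Finally, producing the $m$ codewords reduces to evaluating $m$ polynomials of degree $<k$ at $q$ field elements, which takes $\tilde O(m)$ field operations over $\mathbb{F}_q$; applying the inner encoder is an $O(1)$ table lookup per outer symbol after a one-time tabulation. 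Altogether this gives total construction time $\tilde O(m)$, establishing item (4).
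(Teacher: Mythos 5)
This lemma is imported: the paper cites it from \cite{GuV04} and gives no proof of its own, so there is nothing in the paper to compare against line by line. That said, your route --- realize the sets as supports of a constant-weight concatenated code, with a Reed--Solomon outer code (distinct degree-$<k$ polynomials agree in at most $k-1$ of the $q$ evaluation points) and a small brute-forced constant-weight inner code --- is essentially the construction in the cited work, so the overall strategy is the right one and the reduction of ``construct the family in time $\tilde O(m)$'' to ``evaluate $m$ low-degree polynomials plus a $\poly(\log m)$-time inner search'' is sound.

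Two points need repair before this stands as a self-contained proof. First, your intersection arithmetic is off: the $\le k-1$ agreeing outer positions contribute up to $(k-1)w_0$, and with $k \approx \log m/\log\log m$ and $w_0 = \Theta(c\log\log m)$ (which is what the inner weight must be, see below) this is $\Theta(c\log m)$, not $O(\log m)$; adding the $qw_0/c = c\log m$ from the differing positions gives roughly $3c\log m$ unless you arrange $(k-1)/q \le 1/c$. Second, the two constraints on the inner code pull in opposite directions: the Gilbert--Varshamov/greedy count yields only about $2^{w_0/c}$ inner codewords, so you need $w_0 \gtrsim c\log\log m$ to cover all $q$ outer symbols, while the bound $(k-1)w_0 \le c\log m$ caps $w_0$ from above by essentially the same quantity. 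The constraints are simultaneously satisfiable (take $w_0 = c\log\log m$, $q = c\log m/\log\log m$, $n_0 = O(c^2\log\log m)$, giving universe $O(c^3\log m)$ and intersections $\le 2c\log m$ on the nose), but this balancing is the actual content of the lemma and cannot be waved away by ``invoking the explicit construction of \cite{GuV04}'' --- that is the statement you are trying to prove, so as written the argument is circular at exactly the step that matters.
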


Note that by increasing $c$ in Lemma \ref{lemma-design}
we can have the bound on the intersection size be an
arbitrarily small constant fraction of the set size.

\begin{proof} 
We show how to reduce detecting directed triangles to
\sss. The same approach reduces detecting un-directed
triangles to \xxx~(except that the numbers below would be
considered in base 2 instead of 10). To reduce detecting
un-directed triangles to \sss, we can simply make our
graphs directed by repeating each edge with direction
swapped.

We first review the randomized reduction, then we make it
deterministic. Given the adjacency list of graph
$G=(V,E)$, assign an $\ell$-bit number, uniformly and
independently to each node in the graph $G$, $\ell$ to be
determined, i.e. $\forall a\in V,\quad X_a  \in_U
\{0,1\}^\ell.$ For each edge $e=(a,b)$ let $Y_{(a,b)}:=
(X_a - X_b)$. Return the output of $\sss$ on the set
$Y:=\{Y_{(a,b)}| (a,b) \in E\}$. If there is a triangle
there are always 3 elements in $Y$ summing to $0$.
Otherwise, by a union bound the probability that there
are such 3 elements is $\le 1/2$ for $\ell = 3 \log m$.

To make the reduction deterministic, consider the family
$S_i$ of $O(m)$ sets from Lemma \ref{lemma-design}, with
intersection size less than 1/5 of the set size. Assign
to node $a$ the number $x_a$ whose decimal representation
has 1 in the digits that belong to $S_a$, and 0
otherwise.

As before, we need to show that if there are 3 numbers
$(x_a - x_{a'}), (x_b - x_{b'})$ and $(x_c - x_{c'})$ in $Y$
that sum to $0$ then there is a triangle in the graph.
Since the graph has no self loops, note that the
existence of a triangle is implied by the fact that in
the expression $(x_a - x_{a'}) + (x_b - x_{b'}) + (x_c -
x_{c'})$ each of $x_a, x_b, x_c$ appears exactly once
with each of the two signs. We will show the latter.
First, we claim that each of $x_a, x_b, x_c$ appears the
same number of times with each sign. Indeed, otherwise
write the equation $x_a + x_b + x_c = x_{a'} + x_{b'} +
x_{c'}$ and simplify equal terms. We are left with a
number on one side of the equation that is non-zero in a
set of digits that cannot be covered by the other 5, by
the properties of the design. Hence the equation cannot
hold. Note that when performing the sums in this equation
there is no carry among decimal digits. Finally, we claim
that no number can appear twice with the same sign. For
else it is easy to see that there would be a self loop.
\end{proof}

\section{Reducing listing to detecting triangles}
\label{ListingTriangle-reducing-to-ExistTriangle}

In this section we prove Lemma \ref{lemma-second},
restated next.

\secondlemmarestate*

\begin{proof}
Let $A$ denote the triangle-detection algorithm. The
triangle-listing algorithm is called $B$ and has three
stages. In stage one, we list triangles in $G$ involving
at least one high degree node. In this part we do not use
algorithm $A$.
 In the second stage, we create a new graph
$G'$ which is tripartite, and has the property that for
each triangle in $G$ there uniquely correspond $6$ triangles in $G'$.
 In the final stage we run a recursive algorithm on
$G'$ and list $\min\{6t,6z\}$ triangles in $G'$ which
would correspond to $\min\{t,z\}$ triangles in $G$. This
recursive algorithm will make use of algorithm $A$.

\paragraph{Stage One.}  We consider a node to be high degree
if its degree is $> \delta m$, for a parameter $\delta$
to be set later. We can list triangles involving a high
degree node, if any exists, in time $\tilde O(m/\delta)$.
To see this, note that we can sort the adjacency list and
also make a list of high degree nodes in time $\tilde
O(m)$. Also note that the number of nodes with high
degree is $O(1/\delta)$, because the sum of all degrees
is $2m$. For any high degree node $h$, for each edge
$(a,b)\in E$ we search for two edges $(a,h)$ and $(b,h)$
in the adjacency list. Since the adjacency list is
sorted, the search for each edge will take $\tilde O(
\log m)$ and for each high degree nodes we do this search
$2m$ times so the running time of Stage One is
$T_{B_1}(m)=\tilde O(m/\delta)$. Obviously at any point
of this process, if the number of listed triangles
reaches $t$ we stop. If not, we remove the high degree
nodes from $G$ and move to the next stage.

\paragraph{Stage Two.} We  convert $G$ into a tripartite graph
$G'=(V',E')$ where $V':=I_1\cup I_2\cup I_3$ and each
part of $V'$ is a copy of $V$. For each edge $(a,b)$ in
$E$ place in $E'$ edge $(a_i,b_j)$ for any $i, j \in
\{1,2,3\}, i\ne j$.

A triangle in $G$ yields $6$ in $G'$ by any choice of the
indices $i$ and $j$. A triangle in $G'$ yields one in $G$
by removing the indices, using that the graph is simple.
This stage takes time $T_{B_2}(m)=\tilde O(m)$. In the
next stage we look for $t'=6t$ triangles in $G'$. Note
that $|E'|=6|E|$.

\paragraph{Stage Three.}
We partition each of $I_1, I_2$ and $I_3$ of $V'$
randomly into two subsets, in a way specified below. Now
we have 8 subgraphs, where each subgraph is obtained by
choosing one subsets from each of $I_1, I_2$ and $I_3$.
For each of the subgraphs, we use $A$ to check if the
subgraph contains a triangle. If it does, we recurse on
the subgraph. We recurse till the number of edges in the
subgraph is smaller than a constant $C$, at which point
by brute force in time $\tilde O(C^3)$ we return all the
triangles in the subgraph. Note that each triangle
reported is unique since it only appears in one
subproblem. We only need to list $t'$ triangles in the
graph, so when the number of subproblems that are
detected to have at least one triangle reaches $t'$, we
do not need to introduce more.

To bound the running time, we need to bound the size of
the input for each subproblem. If the random partition
above is selected by deciding uniformly and independently
for each node which subset it would be in, the expected
number of edges in each subgraph is $m/4$. We introduce
another parameter $\gamma$ and consider the probability
that all the $8$ subproblems are of size smaller than
$m/4+m\gamma$. We call these subproblems roughly
balanced.

To make the reduction deterministic we choose the
partition by an almost $4$-wise independent space
\cite{NaN93,AGHP92}.

\begin{lemma}[\cite{NaN93,AGHP92}] \label{l-almost-indep}
There is an algorithm that maps a seed of $O(\log \log n
+ k + \log 1/\alpha)$ bits into $n$ bits in time $\tilde
O(n)$ such that the induced distribution on any $k$ bits
is $\alpha$-close to uniform in statistical distance.
\end{lemma}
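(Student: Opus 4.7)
The plan is to combine two standard ingredients: an $\e$-biased distribution and a Vazirani-style lemma relating small bias to almost $k$-wise independence. First I would invoke the following well-known fact: if $D$ is a distribution on $\zo^n$ whose character sums (Fourier coefficients in the $\pm 1$ basis) all have absolute value at most $\e$, then the marginal of $D$ on any fixed set of $k$ coordinates is at most $\e \cdot 2^{k/2}$ far from uniform in statistical distance. This follows from Parseval on $\zo^k$ together with Cauchy--Schwarz: each nonempty Fourier coefficient of the $k$-coordinate marginal equals $2^{-k}$ times the corresponding character sum of $D$, which bounds the $\ell_2$ distance from uniform, and converting to $\ell_1$ pays a factor of $2^{k/2}$. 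Setting $\e = \alpha \cdot 2^{-k/2}$ therefore reduces the task to constructing an $\e$-biased generator for this value of $\e$.

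For the bias generator itself, a direct AGHP construction (for instance concatenation of a Reed--Solomon code with a Hadamard code, or a powering-based construction from linear-feedback shift registers) achieves seed length $O(\log(n/\e)) = O(\log n + k + \log 1/\alpha)$, with each output bit computable in time $\poly(\log n)$ and therefore all $n$ bits in time $\tilde O(n)$. This alone would already suffice for the applications in this paper. To sharpen the $\log n$ term to the claimed $\log \log n$, I would follow Naor--Naor and first hash the $n$ indices into a small universe of size $\poly(k/\alpha)$ using a short-seed almost-$k$-wise-independent hash family, and then apply the bias generator on that smaller domain; the hash's seed length then dominates and yields the stated $O(\log \log n + k + \log 1/\alpha)$ bound.

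The main technical obstacle is precisely this seed-length optimization: a naive pipeline using exactly $k$-wise-independent polynomial hashing would cost $\Theta(k \log n)$ bits for the hash alone, so one must trade exact $k$-wise independence of the hash for an additional small error that can be absorbed into the overall $\alpha$. The running time $\tilde O(n)$ then falls out routinely, since each of the $n$ output bits is produced by evaluating a short hash followed by a short bias generator, both of which need only polylogarithmic time on the seed.
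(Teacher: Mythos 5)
The paper does not actually prove this lemma; it is imported wholesale from the cited references, so the only meaningful comparison is with the standard construction of Naor--Naor and Alon--Goldreich--H{\aa}stad--Peralta. Your first paragraph reproduces that construction's first half correctly: the Vazirani-type lemma (every nonempty character sum at most $\e$ implies every $k$-bit marginal is $\e\cdot 2^{k/2}$-close to uniform, via Parseval plus Cauchy--Schwarz) together with an AGHP $\e$-biased generator of seed length $O(\log(n/\e))$, with $\e=\alpha 2^{-k/2}$. That already gives seed $O(\log n + k + \log 1/\alpha)$ and, as you note, would suffice for every use of the lemma in this paper, where the seeds are exhaustively enumerated in $\tilde O(m)$ time anyway.

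The step you yourself flag as the main obstacle --- improving $\log n$ to $\log\log n$ --- is where your description does not work as stated. ``Hashing the $n$ indices into a universe of size $\poly(k/\alpha)$'' necessarily creates collisions, and two indices mapped to the same point receive \emph{identical} output bits, which is as far from almost $k$-wise independent as possible; moreover invoking ``a short-seed almost-$k$-wise-independent hash family'' to build an almost-$k$-wise-independent generator is circular. The correct mechanism is linear-algebraic, not hashing: map index $i$ to the $i$-th column $c_i \in \zo^{\ell}$ of a parity-check matrix of a binary code of designed distance $k+1$ (e.g.\ dual BCH), so that $\ell = O(k\log n)$ and every $k$ of the $c_i$ are linearly independent; then output $b_i := \langle r, c_i\rangle$ where $r$ is drawn from an $\e$-biased distribution on $\ell$ bits. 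Every nonempty XOR of at most $k$ output bits equals $\langle r, \sum_{i\in S} c_i\rangle$ with $\sum_{i\in S} c_i \neq 0$, hence has bias at most $\e$, and your Vazirani step finishes the argument. The point is that $n$ now enters the seed length only through $\log \ell = O(\log k + \log\log n)$, giving the claimed $O(\log\log n + k + \log 1/\alpha)$; each $c_i$ and each inner product is computable in $\poly(\log n)$ time, so the $\tilde O(n)$ running time goes through as you describe.
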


\begin{restatable}{claim}{claimrestate}
\label{claim-1} Let $0<\gamma <1/4$. There are $\delta$
and $\alpha$ such that for all sufficiently large $m$, if
we partitioning each of $I_1, I_2$ and $I_3$ into two
subsets using an $\alpha$-almost $4$-wise independent
generator, with probability $> 0$ all the $8$ subgraphs
induced by triples of subsets have less than
$m(1/4+\gamma)$ edges.
\end{restatable}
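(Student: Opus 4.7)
The plan is to apply Chebyshev's inequality to a single one of the eight subgraphs and then union bound. Fix a target subgraph, specified by a choice of half $h_j \in \{0,1\}$ for each of $I_1, I_2, I_3$, and for each edge $e \in E'$ let $X_e$ be the indicator that both endpoints of $e$ land in the designated halves. Each $X_e$ depends on only the two bits assigned to the endpoints of $e$, so the product $X_e X_{e'}$ depends on at most four bits. Consequently, under a truly uniform partition, $E[X_e] = 1/4$ and $E[X_e X_{e'}] = 1/16$ whenever $e, e'$ share no endpoint (so that they are independent). Under the $\alpha$-almost $4$-wise independent distribution from Lemma \ref{l-almost-indep}, these expectations shift by at most $\alpha$ each, so in particular $|E[\sum_e X_e] - m/4| \le \alpha m$.

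The heart of the argument is a second-moment bound, splitting $\mathrm{Var}(\sum_e X_e) = \sum_e \mathrm{Var}(X_e) + \sum_{e \ne e'} \mathrm{Cov}(X_e, X_{e'})$ into three parts. The diagonal is at most $m/4 + O(\alpha m)$. Pairs $(e, e')$ with four distinct endpoints contribute covariance $O(\alpha)$ each (zero under true $4$-wise independence), for a total of $O(\alpha m^2)$. Pairs sharing an endpoint contribute covariance $O(1)$ each; here is where Stage One enters crucially, since every remaining node has degree at most $\delta m$, so $\sum_v d_v^2 \le \delta m \cdot \sum_v d_v \le 2\delta m^2$ and there are at most $O(\delta m^2)$ such pairs. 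Combining these, $\mathrm{Var}(\sum_e X_e) = O((\delta + \alpha) m^2 + m)$.

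Chebyshev then gives $\Pr\!\left[\,\bigl|\sum_e X_e - E[\sum_e X_e]\bigr| > \gamma m / 2\,\right] = O((\delta + \alpha)/\gamma^2 + 1/(\gamma^2 m))$. Choosing $\alpha \le \gamma/2$ (so the pseudorandom mean is within $\gamma m/2$ of $m/4$) and choosing both $\delta$ and $\alpha$ as sufficiently small constant multiples of $\gamma^2$, this probability is below $1/16$ for all sufficiently large $m$. A union bound over the $8$ subgraphs then yields a strictly positive probability that every one of them has at most $m(1/4 + \gamma)$ edges, as required.

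The main obstacle I anticipate is cleanly accounting for the $O(\alpha)$ slack from almost-$4$-wise independence simultaneously in the shifted mean and in every cross term of the variance, so that a single choice of small constants $\delta, \alpha$ depending only on $\gamma$ suffices; the other essential input is the degree bound guaranteed by Stage One, without which the shared-endpoint pairs would swamp the variance and break the concentration step.
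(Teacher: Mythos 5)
Your proposal is correct and follows essentially the same route as the paper: a second-moment/Chebyshev bound on the edge count of a fixed subgraph, using almost-$4$-wise independence to control the cross terms for edge pairs with disjoint endpoints, the Stage One degree bound $\delta m$ to bound the number of endpoint-sharing pairs by $O(\delta m^2)$, and a union bound over the $8$ subgraphs. The only cosmetic difference is that the paper bounds $E\bigl[(\sum_e X_e - m/4)^2\bigr]$ directly (absorbing the mean shift into the second moment about $m/4$), whereas you separate the $\alpha m$ mean shift from the variance; the two are equivalent.
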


We later give the proof of this claim. To make sure that
all the subproblems generated during the execution of the
entire algorithm are roughly balanced, each time we
partition we enumerate all seeds for the almost $4$-wise
independent generator, and pick the first yielding the
conclusion of Claim \ref{claim-1}. This only costs
$\tilde O(m)$ time.

To analyze the running time of Stage Three, let $s_i$
denote the number of subproblems at level $i$ of the
recursion. At the $i$th level, we run algorithm $A$ $s_i$
times on an input of size $\leq
6m\left(1/4+\gamma\right)^i$, so the running time of the
recursive algorithm at level $i$ is $\tilde O\left(s_i
\cdot T_A \left( 6 m\left( 1/4+\gamma \right)^i \right)
\right)$, where $T_A(m)= m^{1+\e}$ is the running time of
algorithm $A$.

Note that $s_i\leq 8^i$ by definition and $s_i\leq t'$
because at any level we keep at most $t'$ subproblems. Pick $\ell := \lg t^{1/3}$. The running time of this
stage is

\begin{align*}
T_{B_3}(6m,6t) & = \tilde O \left( \sum_{i=0}^ {\ell-1} 8^i T_A \left( 6m\left( 1/4+\gamma \right)^i \right)+\sum_{i=\ell}^ {\log 6m} 6t\cdot T_A\left(6m\left(1/4+\gamma \right)^i \right)+6tC^3\right)\\
& =\tilde O\left( \sum_{i=0}^ {\ell-1} 8^i \left( m\left( 1/4+\gamma \right)^i \right)^{1+\e}+\sum_{i=\ell}^ {\log 6m} t\cdot \left(m\left(1/4+\gamma\right)^i\right)^{1+\e}\right).\\
\end{align*}

The asymptotic growth of each sum is dominated by their value for $i=\ell$. 

\begin{align*}
T_{B_3}(6m,6t)& =\tilde O\left(8^\ell \cdot  m^{1+\e} \cdot (1/4+\gamma)^{\ell(1+\e)} + t \cdot m^{1+\e} \cdot (1/4+\gamma)^{\ell(1+\e)} \right)\\
& =\tilde O\left(m^{1+\e} \cdot t^{(1/3) \lg_2 (1/4+\gamma) (1+\e)+1}\right).\\
\end{align*}
Let  $\lg_2 (1/4+\gamma) = -2 + \beta$.
\begin{align*}
T_{B_3}(6m,6t) & =\tilde O\left(m^{1+\e} \cdot t^{(1/3)(-2+\beta)(1+\e)+1}\right)\\
 & =\tilde O\left(m^{1+\e}t^ {(1/3) (1- 2\e + \beta \e +\beta) } \right).\\
\end{align*}

For small enough $\gamma$,  $1-2\e+\beta\e + \beta< 1-\e$, hence:
\begin{align*}
T_{B_3}(6m,6t) & =\tilde O\left(m^{1+\e}t^ {(1/3)(1-\e)} \right).\\
\end{align*}


Finally the running time of algorithm $B$ is
\[T_B(m,t)= T_{B_1}\left( m\right)+T_{B_2}\left(m\right)+T_{B_3}\left(6m,6t\right)=\tilde O\left(m^{1+\e}t^{1/3-\e/3)}\right).\]
\end{proof}

\claimrestate*
\begin{proof}[of claim \ref{claim-1}] Let us fix one of the subgraphs and call it $S$ and define the following random variables,
\[
  \forall 0\leq i \leq m, \quad X_i= \left\{
  \begin{array}{l l}
    1 & \quad \text{if } e_i\in S,\\
   0 & \quad \text{if } e_i \notin S.\\
  \end{array} \right.
\]
We have $|E[X_i] - 1/4| \le \alpha $ and $|E[\sum_i
X_i]-m/4| \le \alpha m$. To prove the claim, we show that
the probability that $S$ has more than $m(1/4+\gamma)$
edges is less than $1/16$; and by a union bound we
conclude. In other words we need to show:
\[ P_S := \Pr\left[\sum_i X_i-m/4 \geq m\gamma\right] \leq 1/16.\]
By a Markov bound we have,
\begin{align*}
P_S\leq \Pr\left[\left(\sum_i X_i-m/4\right)^2 \geq \left(m\gamma\right)^2\right]& \leq E\left[\left(\sum_i X_i-m/4\right)^2\right]/ \left(m\gamma\right)^2.
\end{align*}
Later we bound $E\left[\left(\sum_i
X_i-m/4\right)^2\right]=O((\alpha+\delta)m^2)$ from which
the claim follows.

Now we get the bound for $E\left[\left(\sum_i
X_i-m/4\right)^2\right]$.
\begin{align*}
E\left[\left(\sum_i X_i-m/4\right)^2\right]&=E\left[\left(\sum_i X_i\right)^2+(m/4)^2-\left(m \sum_i X_i\right)/2\right]\\
&\le E\left[\sum_{i \neq j} X_iX_j\right]+E\left[\sum_i X_i^2\right] + \frac{m^2}{16} - \frac{m}2 m\rb{\frac14 - \alpha}\\
&\le E\left[\sum_{i \neq j} X_iX_j\right]+\frac{m}{4} + \alpha m -\frac{m^2}{16} + m^2 \alpha/2 \\
& = E\left[\sum_{i \neq j} X_iX_j\right]+ O(\alpha m^2) -\frac{m^2}{16}.
\end{align*}
$E\left[ X_iX_j\right]$ is the probability that two edges
$e_i$ and $e_j$ are both in $S$. If our distribution were
uniform the probability would be $1/16$ for the pairs of
edges that do not share a node, and $1/8$ for the pairs
of edges that do share a node. Let $\rho$ be the number
of unordered pairs of edges that share a node. We have:
\[
E\left[\sum_{i \neq j} X_iX_j\right] = \sum_{i \neq
j}E\left[X_iX_j\right] \le 2 \rho(1/8 + \alpha) +
2\rb{\binom{m}{2}-\rho}(1/16+\alpha) \\ \le m^2/16 +
\rho/8 + 4\rho \alpha + \alpha m^2 \le m^2/16 +
\rho/8 + O(\alpha m^2).
\]
Note
\[\rho = \sum_{a\in V} \binom{\mathrm{degree}(a)}{2} \le
\sum_{a\in V} \mathrm{degree}(a)^2 /2 \le \delta m
\sum_{a\in V} \mathrm{degree}(a) /2 \le \delta m^2,\]
since after stage one of the algorithm there are no nodes
with degree more than $\delta m$.

Hence we obtain
\[
E\left[\left(\sum_i X_i-m/4\right)^2\right] \leq
\frac{m}4 + O(\alpha + \delta)m^2 = O(\alpha + \delta)m^2,\] as desired.
\end{proof}

\paragraph{Acknowledgments.}
We are very grateful to Rasmus Pagh and Virginia
Vassilevska Williams for answering many questions on
finding triangles in graphs. Rasmus also pointed us to
\cite{PaghP06,Amossen11}. We also thank Siyao Guo for
pointing out that a step in a previous proof of Lemma
\ref{lemma-listing-3xor-hard} was useless, and Ryan
Williams for stimulating discussions.

\bibliographystyle{alpha}
{\small \ifnum\EmanueleViolaDir=1
  \bibliography{../OmniBib}
\else
  \bibliography{OmniBib}
\fi
}

\appendix

\section{Reducing $\xxx$ to listing
triangles}\label{Reducing-xxx-to-ListingTriangles}

In this section we prove theorem \ref{t-patrascuxxx}.
\patrascuxxx*

The proof of Theorem \ref{t-patrascuxxx} follows the one
in \cite{Patrascu10} for $\sss$, which builds on results
in \cite{BaranDP08}. However the proof of Theorem
\ref{t-patrascuxxx} is a bit simpler. This is because it
avoids some steps in \cite{BaranDP08,Patrascu10} which
are mysterious to us. And because in our context we have
at our disposal hash functions that are \emph{linear},
while over the integers one has to work with
``almost-linearity,'' cf.~\cite{BaranDP08,Patrascu10}.

The remainder of this section is organized as follows. In
\S\ref{s-hashing} we cover some preliminaries and prove a
hashing lemma by \cite{BaranDP08} that will be used
later.\footnote{In \cite{BaranDP08,Patrascu10} they
appear to use this lemma with a hash function that is not
known to satisfy the hypothesis of the lemma. However
probably one can use instead similar hash functions such
as one in \cite{Dietzfelbinger96} that does satisfy the
hypothesis. We thank Martin Dietzfelbinger for a
discussion on hash functions.} The proof of the reduction
in Theorem \ref{t-patrascuxxx} is broken up in two
stages. First, in \S\ref{s-cxxx} we reduce $\xxx$ to the
problem $\cxxx$ which is a ``convolution'' version of
$\xxx$. Then in \S\ref{s-reducing-to-triangles} we reduce
$\cxxx$ to listing triangles.

\subsection{Hashing and preliminaries} \label{s-hashing}

We define next the standard hash function we will use.

\begin{definition}\label{d-hash} For input length $\ell$ and output length $r$, the hash function $h$ uses $r$ $\ell$-bit keys $\overline a := (a^1, \ldots, a^r)$ and is defined as $h_{\bar a}(x) := (\langle a^1, x \rangle, \ldots, \langle a^r, x \rangle)$, where $\langle .,. \rangle$ denotes inner product modulo $2$.
\end{definition}

We note that this hash function is linear: $h_{\bar a}(x) + h_{\bar a}(y) = h_{\bar a}(x+y)$ for any $x \ne y \in \zo^\ell$, where addition is bit-wise xor. Also, $h_{\bar a}(0) = 0$ for any $\bar a$, and $\Pr_{\bar a}[h_{\bar a}(x) = h_{\bar a}(y)] \le 2^{-r}$ for any $x \ne y$.

\medskip

Before discussing the reductions, we make some remarks on the problem \xxx. First, for simplicity we are going to assume that the input vectors are unique. It is easy to deal separately with solutions involving repeated vectors. Next we argue that for our purposes the length $\ell$ of the vectors in instances of \xxx~can be assumed to be $(2-o(1)) \lg n \le \ell \le 3 \lg n$. Indeed, if $\ell \le (2- \Omega(1))\lg n$ one can use the fast Walsh-Hadamard transform to solve \xxx~efficiently, just like one can use fast Fourier transform for \sss, cf.~\cite[Exercise 30.1-7]{CLRS01}. For \xxx~one gets a running time of $2^\ell \ell^{O(1)} + \tilde O(n \ell)$, where the first term comes from the fast algorithms to compute the transform, see e.g.~\cite[\S 2.1]{MaslenR97}. (The second term accounts for preprocessing the input.) When $\ell \le (2- \Omega(1))\lg n$, the running time is $n^{2-\Omega(1)}$, i.e., subquadratic.

Also, the length can be reduced to $3 \lg n$ via hashing. Specifically, an instance $v_1,\ldots,v_n \in \zo^\ell$ of \xxx~is reduced to $h(v_1),\ldots,h(v_n)$ where $h = h_a$ is the hash function with range of $r = 3 \lg n$ bits for a randomly chosen $\bar a$. Correctness follows because on the one hand if $v_i + v_j + v_k = 0$ then $h(v_i) + h(v_j) + h(v_k) = h(v_i + v_j + v_k) = h(0) = 0$ by linearity of $h$ and the fact that $h(0) = 0$ always; on the other hand if $v_i + v_j + v_k \ne 0$ then $\Pr[h(v_i + v_j + v_k) = 0] = 1/2^r$ since $h$ maps uniformly in $\zo^r$ any non-zero input. Hence by a union bound over all $\le \binom{n}{3}$ choices for vectors such that $v_i + v_j + v_k \ne 0$, the probability of a false positive is $\binom{n}{3}/n^3 < 1/6$.

\medskip

For the proof we need to bound the number of elements $x$ whose buckets
$B_h(x) := \{y \in S : h(x) = h(y)\}$
have unusually large load. If our hash function was $3$-wise independent the desired bound would follow from Chebyshev's inequality. But our hash function is only pairwise independent, and we do not see a better way than using a hashing lemma from \cite{BaranDP08} that in fact relies on a weaker property, cf.~the discussion in \cite{BaranDP08}.

When hashing $n$ elements to $[R] = \{1,2,\ldots,R\}$, the expected load of each bucket is $n/R$. The lemma guarantees that the expected number of elements hashing to buckets with a load $\ge 2n/R + k$ is $\le n/k$.

\begin{lemma}[\cite{BaranDP08}] \label{lemma-BaranDP}
Let $h$ be a random function $h : U \to [R]$ such that for any $x \ne y$, $\Pr_h[h(x) = h(y)] \le 1/R$. Let $S$ be a set of $n$ elements, and denote $B_h(x) = \{y \in S : h(x) = h(y)\}$. We have $$\Pr_{h,x}[|B_h(x)| \ge 2n/R + k] \le 1/k.$$
In particular, the expected number of elements from $S$ with $|B_h(x)| \ge 2n/R + k$ is $\le n/k$.
\end{lemma}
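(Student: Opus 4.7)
The plan is to deduce both statements of the lemma from a single bound on the expected number of ``overloaded'' elements. Setting $T := 2n/R + k$ and writing $R_j := \{y \in S : h(y) = j\}$ for each $j \in [R]$, we have $|B_h(x)| = |R_{h(x)}|$, so for any $h$ the number of $x \in S$ with $|B_h(x)| \ge T$ is exactly $\sum_{j : |R_j| \ge T} |R_j|$. Since $x$ is drawn uniformly from $S$, the probability in the statement equals this expected count divided by $n$, and so both parts of the lemma reduce to showing
\begin{equation*}
E_h\left[\sum_{j : |R_j| \ge T} |R_j|\right] \le n/k.
\end{equation*}

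The algebraic heart of the argument is an inequality relating $|R_j|$ on overloaded buckets to the squared deviation of $|R_j|$ from the mean load $n/R$. Specifically, for any $j$ with $|R_j| \ge T$,
\begin{equation*}
(|R_j| - n/R)^2 \ge |R_j|^2 - 2|R_j|(n/R) = |R_j|(|R_j| - 2n/R) \ge k \cdot |R_j|,
\end{equation*}
where the final step uses $|R_j| \ge T = 2n/R + k$. Hence $|R_j| \le (|R_j| - n/R)^2/k$ on overloaded buckets; adding the non-negative contributions from the remaining buckets only weakens the bound, so
\begin{equation*}
\sum_{j : |R_j| \ge T} |R_j| \le \frac{1}{k} \sum_j (|R_j| - n/R)^2.
\end{equation*}

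To finish, I would expand $\sum_j (|R_j| - n/R)^2 = \sum_j |R_j|^2 - n^2/R = \sum_{x,y \in S} \mathbb{1}[h(x) = h(y)] - n^2/R$, and invoke the pairwise collision hypothesis to obtain $E_h\left[\sum_{x,y \in S} \mathbb{1}[h(x) = h(y)]\right] \le n + n(n-1)/R$ (separating the $n$ diagonal terms from the $n(n-1)$ off-diagonal ones). Putting things together, the expectation of the right-hand side above is at most $n - n/R \le n$, and dividing by $k$ yields the desired $n/k$ bound.

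The step I expect to be the main obstacle is spotting the shifted-square inequality $(|R_j| - n/R)^2 \ge k|R_j|$ on overloaded buckets. The more immediate alternatives---Markov applied directly to $|B_h(x)|$, or the bound $|R_j| \mathbb{1}[|R_j| \ge T] \le |R_j|(|R_j|-1)/(T-1)$---yield only bounds of order $(n/R)/(T-1)$, which is far weaker than $1/k$ in the typical regime $n/R \gg 1$. Recentering at the average load $n/R$ before squaring is precisely what converts the pairwise (second-moment-style) collision control into the element-weighted probability bound we need, without appealing to any higher-order independence of $h$.
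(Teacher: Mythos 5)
Your proof is correct, and it reaches the bound by a genuinely different route from the paper. The paper's argument (following \cite{BaranDP08}) works with the collision probability $q_h = \Pr_{x,y}[h(x)=h(y)]$ for $x,y$ uniform in $S$: it splits this probability according to whether $x$ lands in the light set $L_h$, lower-bounds the light contribution by $1/R$ via a Cauchy--Schwarz fact about collision probabilities of arbitrary functions, lower-bounds the heavy contribution by $(2/R + k/n)p_h$ using the overload threshold, and then compares against the hypothesis $E[q_h] \le 1/R + 1/n$. You instead recenter the bucket loads at the mean $n/R$ and observe the pointwise inequality $(|R_j| - n/R)^2 \ge k|R_j|$ on overloaded buckets, so the total mass in overloaded buckets is controlled by $\frac{1}{k}\sum_j(|R_j| - n/R)^2$, whose expectation you compute to be at most $n$ directly from the pairwise collision hypothesis. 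The two arguments are both second-moment bounds on the same underlying quantity ($q_h = \sum_j |R_j|^2/n^2$), but your decomposition avoids the conditioning on $L_h$ and the separate Cauchy--Schwarz fact entirely, which makes it somewhat more self-contained; the paper's version has the feature of isolating exactly where the hypothesis $\Pr_h[h(x)=h(y)] \le 1/R$ is being played off against the unconditional lower bound $\ge 1/R$. All of your individual steps check out, including the final accounting $n + n(n-1)/R - n^2/R = n - n/R \le n$.
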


The proof of the lemma uses the following fact, whose proof is an easy application of the Cauchy-Schwarz inequality.

\begin{fact}\label{fact-collision} Let $f : D \to [R]$ be a function. Pick $x,y$ independently and uniformly in $D$. Then $\Pr_{x,y}[f(x) = f(y)] \ge 1/R$.
\end{fact}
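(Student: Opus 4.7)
The plan is to reduce the collision probability to a sum of squared probabilities and then apply Cauchy--Schwarz. For each $r \in [R]$, let $p_r := \Pr_x[f(x) = r] = |f^{-1}(r)|/|D|$. Since $x$ and $y$ are independent and uniform in $D$, the event $f(x) = f(y)$ partitions according to the common value, giving
\[ \Pr_{x,y}[f(x) = f(y)] = \sum_{r \in [R]} p_r^2. \]

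The goal then becomes showing $\sum_r p_r^2 \ge 1/R$. Since the $p_r$ are probabilities summing to $1$, Cauchy--Schwarz applied to the vectors $(1,1,\ldots,1)$ and $(p_1,\ldots,p_R)$ yields
\[ 1 = \Bigl(\sum_{r=1}^R 1 \cdot p_r\Bigr)^2 \le \Bigl(\sum_{r=1}^R 1^2\Bigr)\Bigl(\sum_{r=1}^R p_r^2\Bigr) = R \sum_{r=1}^R p_r^2, \]
so $\sum_r p_r^2 \ge 1/R$, which combined with the identity above gives the claim.

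There is essentially no obstacle here: the only substantive choice is to observe that the collision probability equals $\sum_r p_r^2$ rather than attempting to bound it directly. One could equivalently phrase this as a consequence of the fact that the uniform distribution minimizes $\sum_r p_r^2$ subject to $\sum_r p_r = 1$ (by convexity of $x \mapsto x^2$), but Cauchy--Schwarz gives the cleanest one-line argument, matching the hint in the statement.
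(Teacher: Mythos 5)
Your proof is correct and follows exactly the route the paper indicates: writing the collision probability as $\sum_r p_r^2$ and applying Cauchy--Schwarz to the vectors $(1,\ldots,1)$ and $(p_1,\ldots,p_R)$. The paper states only that the fact "is an easy application of the Cauchy--Schwarz inequality," and your argument is precisely that application.
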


\begin{proof}[of Lemma \ref{lemma-BaranDP}]
Pick $x,y$ uniformly and independently in $S$ ($x =y$ possible). For given $h$, let
\begin{align*}
p_h & := \Pr_x[|B(x)| \ge 2n/R + k],\\
q_h & := \Pr_{x,y}[h(x) = h(y)].
\end{align*}
Note we aim to bound $E[p_h] \le 1/k$, while by assumption
\begin{equation} \label{eHashEgh}
E[q_h] = \Pr_{h,x,y}[h(x) = h(y)] \le 1/R + 1/n.
\end{equation}

Now let $L_h := \{x : |B_h(x)| < 2n/R + k\}$, and note $|L_h| = (1-p_h)n$.
Let us write
$$q_h = \Pr_{x,y}[h(x) = h(y) | x \in L_h] \Pr[x \in L_h] +
\Pr_{x,y}[h(x) = h(y) | x \not \in L_h] \Pr[x \not \in L_h].$$

The latter summand is $\ge ((2n/R + k)/n) p_h = (2/R + k/n)p_h$.

For the first summand, note
$$\Pr_{x,y}[h(x) = h(y) | x \in L_h] \Pr[x \in L_h]
= \Pr_{x,y}[h(x) = h(y) | x \wedge y \in L_h] \Pr[x \wedge y \in L_h]$$
because if $y \not \in L_h$ then there cannot be a collision with $x \in L_h$. The term $\Pr_{x,y}[h(x) = h(y) | x \wedge y \in L_h]$ is $\ge 1/R$ by Fact \ref{fact-collision}. The term $\Pr[x \wedge y \in L_h]$ is $(1-p_h)^2 \ge 1-2p_h$.

Overall,
$$q_h \ge \frac{1}R (1-2p_h) + (2/R + k/n) p_h = p_h k/n + 1/R.$$

Taking expectations and recalling \eqref{eHashEgh},
$$E[p_h] k/n + 1/R \le 1/R + 1/n,$$
as desired.
\end{proof}

\subsection{Convolution \xxx} \label{s-cxxx}

Define the problem \emph{convolution \xxx}, denoted \cxxx, as: Given array $A$ of $n$ strings of $O(\lg n)$ bits, determine if $\exists i,j \le n : A[i] + A[j] = A[i+j]$. Again, sum is bit-wise xor.

\begin{lemma}\label{lemma-xxx-cxxx} If \cxxx~can be solved with error $1\%$ in time $t \le n^{2-\Omega(1)}$, then so can $\xxx$.
\end{lemma}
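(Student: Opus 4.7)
The plan is to reduce \xxx~to $\tilde O(L^3)$ calls of \cxxx~using a random linear hash to align xor-triples with the convolution constraint. Given a reduced \xxx~instance $v_1,\ldots,v_n \in \zo^\ell$ with $\ell \le 3 \lg n$, I would pick a random hash $h : \zo^\ell \to \zo^{\lg n}$ as in Definition~\ref{d-hash}, producing $n$ buckets. Since $h$ is $\mathbb{F}_2$-linear, any triple with $v_a \oplus v_b \oplus v_c = 0$ yields bucket indices satisfying $h(v_c) = h(v_a) \oplus h(v_b)$; so if I could populate an array $A$ with one representative per bucket such that the representatives of $h(v_a), h(v_b), h(v_c)$ are $v_a, v_b, v_c$, then $A[h(v_a)] \oplus A[h(v_b)] = A[h(v_a) \oplus h(v_b)]$ would be exactly the \cxxx~witness (taking the $i+j$ in the definition of \cxxx~to mean $i \oplus j$, the natural convention in this $\mathbb{F}_2$-linear setting).

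The main obstacle is collisions, which I would handle in three cases. \emph{Heavy buckets}: with threshold $L$, Lemma~\ref{lemma-BaranDP} applied with $R=n$ and $k=L$ bounds the expected number of heavy elements by $\tilde O(n/L)$. For each heavy $v$, I sort the input and binary-search for $v \oplus v_i$ over all $v_i$, in total time $\tilde O(n^2/L)$. \emph{Within-bucket pairs}: if $v_a, v_b$ share a bucket, linearity forces $v_c = v_a \oplus v_b$ into bucket~$0$, so I binary-search for $v_a \oplus v_b$ for every such pair; the total number of within-bucket pairs is $\tilde O(n)$ by pairwise independence of $h$. \emph{Cross-bucket light triples}: I repeat $\Theta(L^3 \log n)$ rounds, each choosing a fresh uniform representative from every non-empty light bucket, filling $A$, and calling \cxxx. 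For any fixed such triple, a round succeeds with probability $\ge 1/L^3$, so some round succeeds except with probability $1/n$; positive outputs are verified in $O(1)$ time by checking whether the three array entries are genuine inputs summing to $0$, eliminating false positives from the oracle's $1\%$ error, and a constant number of outer repetitions amplifies the remaining one-sided error below $1\%$.

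Setting $L = n^{\delta/4}$ when $t \le n^{2-\delta}$ equalizes the two dominant terms, giving total time $\tilde O(L^3 t + n^2/L) = \tilde O(n^{2-\delta/4})$, still subquadratic. The principal difficulty I anticipate is ensuring this balance survives all three cases simultaneously: the representative-selection randomness must interact cleanly with the $1\%$ oracle error (handled by verification and outer amplification), and the heavy-element brute force must fit under the same subquadratic budget as the $L^3$ oracle calls (handled by choosing $L$ polynomially small in $n^\delta$). The crucial enabling fact throughout is that $h$ is \emph{exactly} linear over $\mathbb{F}_2$, so that bucket-index xor arithmetic exactly mirrors the vector-xor arithmetic of the triple, avoiding the ``almost linearity'' issues that make the integer analog in \cite{BaranDP08,Patrascu10} more delicate.
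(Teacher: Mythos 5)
Your reduction is correct in substance and follows the same skeleton as the paper's proof: hash with the linear $h$ of Definition \ref{d-hash}, dispose of elements in overloaded buckets by brute force using Lemma \ref{lemma-BaranDP}, and then cover the remaining triples by many calls to \cxxx\ indexed by within-bucket choices, exploiting exact $\mathbb{F}_2$-linearity so that bucket indices xor correctly. The execution differs in two ways. First, you keep the full range $R=n$ and pay for coverage with $\Theta(L^3\log n)$ calls on size-$n$ arrays using \emph{random} representatives, one per bucket per round; the paper instead shrinks to $R=n^{1-\alpha}$ and \emph{deterministically} enumerates all $(3n/R)^3$ triples of positions, placing the $i$th, $j$th, $k$th bucket elements at addresses tagged with $01$, $10$, $11$ so that the tag xors work out. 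The paper's tagging automatically covers triples in which two elements share a bucket, whereas your single-representative scheme cannot, and you correctly compensate with the separate within-bucket-pairs case (expected $O(n)$ colliding pairs, each resolved by binary search); this extra case is the price of your variant, and the coupon-collector $\log n$ factor is the price of randomizing the enumeration. Both parameterizations balance to subquadratic time, so either route proves the lemma.

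The one soft spot is error handling. \cxxx\ as defined is a decision problem, so ``verifying positive outputs by checking the three array entries'' presumes a witness you are not given; and since you make $\Theta(L^3\log n)=n^{\Omega(1)}$ calls, each with a $1\%$ chance of a false positive, a constant number of outer repetitions cannot control the accumulated error. The standard fix, which the paper uses explicitly, is to amplify each individual \cxxx\ call to error $1/\mathrm{poly}(n)$ by $O(\log n)$ majority votes before union-bounding over all calls (and to fill empty buckets with padding values that provably cannot participate in a solution, so that a correct ``yes'' always certifies a genuine \xxx\ solution). This costs only a polylogarithmic factor and leaves your time bound $\tilde O(L^3 t + n^2/L)=\tilde O(n^{2-\delta/4})$ intact.
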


\paragraph{Intuition.}
We are given an instance of 3XOR consisting of a set $S$ of $n$ vectors. Suppose for any $x \in S$ we define $A[x] := x$, and untouched elements of $A[x]$ are set randomly so as to never participate in a solution.

Now if $x+y=z$ then $A[x]+A[y] = A[z] = A[x+y]$. Using again $x+y=z$ we get $A[x]+A[y] = A[x+y]$. Hence this solution will be found in C3XOR. Conversely a solution to C3XOR corresponds to a 3XOR solution, since $A$ is filled with elements with $S$ (and random otherwise).

This reduction is correct. But it is too slow because the size of $A$ may be too large.

In our second attempt we try to do as above, but make sure the vector $A$ is small. Suppose we had a hash function $h : S \to [n]$ that was both 1-1 and linear.

Then we could let again $A[h(x)] := x$.

If $x+y=z$ then $A[h(x)] + A[h(y)] = A[h(z)]$ by definition. And using again $x+y=z$ and linearity, we get $h(x+y) = h(x)+h(y) = h(z)$, and so we get
$A[h(x)] + A[h(y)] = A[h(x) + h(y)]$ as desired.

But the problem is that there is no such hash function. (Using linear algebra one sees that there is no hash function that shrinks and is both linear and 1-1.)

The solution is to implement the hash-function based solution, and handle the few collisions separately.

\begin{proof}
Use the hash function $h$ from Definition \ref{d-hash} mapping input elements of $\ell = O(\lg n)$ bits to $r := (1-\alpha) \lg n$ bits, for a constant $\alpha$ to be determined. So the range has size $R = 2^r = n^{1-\alpha}$. By Lemma \ref{lemma-BaranDP}, the expected number of elements falling into buckets with more than $t := 3n/R$ elements is $\le R$. For each of these elements, we can easily determine in time $\tilde O(n)$ if it participates to a solution. The time for this part is $\tilde O(R n)$ with high probability, by a Markov bound.

It remains to look for solutions $x+y+z=0$ where the three elements all are
hashed to not-overloaded buckets. For each $i,j,k \in [t]$, we look for a
solution where $x,y,z$ are respectively at positions $i,j,k$ of their buckets.
Specifically, fill an array $A$ of size $O(R)$ as follows: put the $i$th ($j$th,
$k$th) element $x$ of bucket $h(x)$ at position $A[h(x)01]$ ($A[h(x)10],
A[h(x)11]$), where $h(x)01$ denotes the concatenation of the bit-strings $h(x)$
and $01$. The untouched elements of $A$ are set to a value large enough that it
can be easily shown they cannot participate in a solution. Run the algorithm for
C3XOR on $A$.

If there is a solution $x+y+z=0$, suppose $x,y,z$ are the $i$th ($j$th, $k$th) elements of their buckets. Then for that choice of $i,j,k$ we have
$A[h(x)01] = x, A[h(y)10] = y, A[h(z)11]  = z$, and so
$A[h(x)01] + A[h(y)10] = A[h(z)11]$.
By linearity of $h$, and the choice of the vectors $01,10,11$, we get $h(z)11 = h(x)01 + h(y)10$. So this solution will be found.

Conversely, any solution found will be a valid solution for $3XOR$, by construction of $A$.

The time for this part is as follows. We run over $t^3 = O(n^3/R^3)$ choices for the indices. For each choice we run the $\cxxx$ algorithm on an array of size $O(R)$. If the time for the latter is $R^{2-\e}$, we can pick $R = n^{1-\alpha}$ for a small enough $\alpha$ so that the time is $\tilde O(n^{3\alpha} n^{(2-\e)(1-\alpha)}) = n^{2-\e'}$. (Here we first amplify the error of the $\cxxx$ algorithm to $1/n^3$ by running it $O(\lg n)$ times and taking majority.)

The first part only takes time $O(Rn) = O(n^{2-\alpha})$, so overall the time is $n^{2-\e''}$.
\end{proof}

It is worth mentioning that although Lemma \ref{lemma-xxx-cxxx} shows that $\cxxx$ is at least as hard as $\xxx$, we can easily prove that is not any harder than $\xxx$ either.
\begin{lemma}\label{lemma-cxxx-xxx}
If \xxx~can be solved in time $t \le n^{2-\Omega(1)}$, then so can $\cxxx$.
\end{lemma}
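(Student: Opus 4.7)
The plan is a direct index-tagging reduction: convert the $\cxxx$ instance into an $\xxx$ instance by appending each index to its array value, so that any $\xxx$ solution automatically satisfies the convolution-style index constraint.

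Without loss of generality assume $n$ is a power of $2$ (otherwise pad $A$ to length $2^{\lceil \lg n \rceil}$ with fresh distinct values so large they cannot participate in any solution), identify each index $i$ with its $s := \lg n$-bit binary representation, and define
\[ v_i := (A[i], i) \in \zo^{O(\lg n)}, \qquad i = 0, 1, \ldots, n-1. \]
The $v_i$ are pairwise distinct because their index tags differ, so the ``distinct inputs'' convention adopted for $\xxx$ in the discussion preceding Lemma~\ref{lemma-BaranDP} is satisfied. Run the assumed subquadratic $\xxx$ algorithm on $\{v_0, \ldots, v_{n-1}\}$ and report YES iff it does.

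For correctness, $v_i \oplus v_j \oplus v_k = 0$ holds iff $A[i] \oplus A[j] \oplus A[k] = 0$ and $i \oplus j \oplus k = 0$, the latter equivalent to $k = i \oplus j$. So any such triple is precisely a $\cxxx$ solution in the paper's bit-wise-xor index convention (the same convention used in the proof of Lemma~\ref{lemma-xxx-cxxx}, as witnessed by the identity ``$h(z)11 = h(x)01 + h(y)10$''). The setup takes $\tilde O(n)$ time and the $\xxx$ call takes $n^{2-\Omega(1)}$ by hypothesis, giving a total of $n^{2-\Omega(1)}$.

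The only subtlety I expect -- and essentially the sole obstacle -- is reconciling the ``three distinct elements'' convention of $\xxx$ with a $\cxxx$ solution $(i,j)$ whose indices may coincide. Since $k = i \oplus j$, any collision among $\{i, j, k\}$ forces one of them to be $0$, and hence $A[0] = 0$. I would therefore test in $O(1)$ whether $A[0] = 0$ upfront: if so, output YES immediately, since $(i,j) = (0,0)$ is already a valid $\cxxx$ solution; otherwise every $\cxxx$ solution necessarily has three distinct indices and the above reduction is exact.
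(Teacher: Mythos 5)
Your reduction is exactly the paper's: tag each array entry with its index, i.e.\ form $S=\{A[i]\,i : i\in[n]\}$, and run \xxx, so that a zero xor forces both $A[i]+A[j]+A[k]=0$ and $k=i+j$. The paper dismisses the details with ``it is easy to see''; your extra care about padding, distinctness of the tagged vectors, and the degenerate case (which reduces to checking $A[0]=0$) only makes the same argument more complete.
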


\begin{proof}
Let array $A$ of $n$ elements be the input to $\cxxx$, create set $S:=\{\ A[i]i\  | \ \forall i \in [n]\}$ where $ A[i]i$ denotes the concatenation of bit-strings of $i$ and $A[i]$. Run $\xxx$ on the set $S$. It is easy to see that if \[ \exists\  a,b,c \in S \text{ such that } a+b+c=0 \iff \exists \  i,j \in [n]   \text{ such that } A[i]+A[j]= A[i+j] .\]
\end{proof}

A similar method can be applied to reduce $\csss$ to $\sss$. The only difference is in creating the elements of $S$, $S:= \{\ A[i]0i\ |\ \forall i \in [n]\}$. The 0-bit in between $A[i]$ and $i$ is to ensure that the (possible) final carry of the sum of the indices is not added to the sum of the elements of $A$.

\subsection{Reducing $\cxxx$ to listing triangles} \label{s-reducing-to-triangles}

\begin{lemma} \label{lemma-listing-3xor-hard}
Suppose that given the adjacency list of a graph with $m$ edges and $z$ triangles (and $O(m)$ nodes) one can list $\min\{z, m\}$ triangles in time $m^{1.3\bar3 - \e}$ for a constant $\e > 0$. Then one can solve $\cxxx$ on a set of size $n$ with error $1\%$ in time $n^{2-\e'}$ for a constant $\e' > 0$.
\end{lemma}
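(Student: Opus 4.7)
The plan is to follow P{\v a}tra{\c s}cu's reduction \cite{Patrascu10} from convolution 3SUM to triangle listing, exploiting the fact that our hash $h$ from Definition \ref{d-hash} is exactly linear over $\mathrm{GF}(2)$. This linearity should streamline the argument relative to \cite{BaranDP08,Patrascu10}, which must work with ``almost-linear'' hashes over the integers.

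Given the C3XOR input $A[1..n]$ with entries in $\zo^{O(\lg n)}$, I would first pick parameters $s$ (block size) and $R$ (hash range). Partition $[n]$ into $T := n/s$ blocks $B_1,\ldots,B_T$, apply the linear hash $h$ with range $R$, and define cells $\beta(a,u) := \{i \in B_a : h(A[i]) = u\}$. By Lemma \ref{lemma-BaranDP}, only $O(n/k)$ indices fall in hash buckets of load $\ge 2n/R + k$; I would handle these overloaded indices up front by a brute-force scan of $\tilde O(n)$ per index, for a total of $\tilde O(n^2/k)$, which stays within the budget for $k = n^{\Omega(1)}$.

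Next I would build a tripartite graph $G = (V_1 \cup V_2 \cup V_3, E)$ whose vertices encode non-empty cells. Edges are designed so that every triangle corresponds to a triple of cells $\beta(a,u), \beta(b,v), \beta(c,w)$ satisfying two structural constraints forced on any C3XOR solution $A[i] \oplus A[j] = A[i+j]$ with $i \in B_a, j \in B_b, i+j \in B_c$: first, $c \in \{a+b, a+b+1\}$ coming from the block partition of integer addition; second, $u \oplus v = w$ coming from the identity $h(A[i]) \oplus h(A[j]) = h(A[i+j])$. Each triangle yields an $O(1)$-size list of candidate pairs $(i,j)$ to verify in $O(1)$ time. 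The triangle count is bounded by $O(m)$ via a load-balancing argument that exploits the fact that all remaining cells are light after the overloaded-bucket preprocessing.

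Finally, with the triangle-listing algorithm running in $m^{4/3 - \e}$ time on $m$ edges, I would pick $s, R$ so that $m \approx n^{3/2}$, giving a listing time of $n^{(3/2)(4/3-\e)} = n^{2 - 3\e/2}$ plus $\tilde O(m) = \tilde O(n^{3/2})$ for verification. The error is reduced from constant to below $1\%$ by amplifying the randomized hash $O(1)$ times and OR-ing outputs. The main obstacle is the precise edge rule that simultaneously (a) captures every solution as a triangle, (b) bounds the triangle count by $O(m)$, and (c) leaves $O(1)$ candidates per triangle; this mirrors P{\v a}tra{\c s}cu's construction for C3SUM but should admit a cleaner rendition thanks to the truly linear hash, since $u \oplus v = w$ holds identically on solutions rather than up to a carry.
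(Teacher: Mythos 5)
Your plan defers exactly the step that constitutes the proof: the graph construction. You yourself flag that ``the main obstacle is the precise edge rule'' satisfying (a)--(c), and without it there is no reduction. Moreover, two of your three desiderata are off. First, (b) cannot be achieved as stated: every genuine solution contributes a triangle, and a \cxxx~instance can have $\Theta(n^2)$ solutions, so the total triangle count cannot be bounded by $O(m)\approx n^{3/2}$ in general. What you actually need (and what the paper uses) is a bound on the number of \emph{spurious} triangles only; then, since the listing algorithm returns $\min\{z,m\}$ triangles, either it returns all triangles or it returns $m$ of them of which fewer than $m$ are spurious, so in either case a genuine solution is found when one exists. Second, your constraint $c\in\{a+b,a+b+1\}$ imports the carry structure of integer addition, but in \cxxx~as defined in this paper the index sum $i+j$ is also bit-wise xor; there are no carries, and this is precisely what permits a construction cleaner than P{\v a}tra{\c s}cu's, rather than a transcription of it.

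For the record, the paper's construction is direct and uses neither blocks, nor cells, nor Lemma \ref{lemma-BaranDP} (that lemma is used in the 3XOR-to-\cxxx~step, Lemma \ref{lemma-xxx-cxxx}). Rewrite the solution condition as $\exists a,b:\ A[a+b_h]+A[a+b_\ell]=A[b]$, where $b_h,b_\ell$ are the two halves of $b$; this is an equivalent reformulation because index addition is xor, so $(a+b_h)+(a+b_\ell)=b$. Hash values with the linear $h$ to $R=\sqrt n$ buckets and build a tripartite graph with parts $\{(b_h,x)\}$ and $\{(b_\ell,y)\}$ (each $\sqrt n\times R$ nodes) and $\{(a)\}$ ($n$ nodes); connect $(a)$ to $(b_h,x)$ iff $h(A[a+b_h])=x$, connect $(a)$ to $(b_\ell,y)$ iff $h(A[a+b_\ell])=y$, and connect $(b_h,x)$ to $(b_\ell,y)$ iff $h(A[b])=x+y$. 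Each edge class then has exactly $n^{1.5}$ edges, triangles correspond bijectively to pairs $(a,b)$ with $h(A[a+b_h])+h(A[a+b_\ell])=h(A[b])$, and by a Markov argument the non-solutions among these number at most $2n^{1.5}<m:=3n^{1.5}$ with constant probability. Your parameter choice $m\approx n^{3/2}$, the final bound $n^{2-\e'}$, and the amplification by independent repetitions do match the paper; the missing content is the graph itself and the correct accounting of spurious versus total triangles.
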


In fact, the hard graph instances will have $n = m^{1-\Omega(1)}$ nodes.

\begin{proof}
We are given an array $A$ and want to know if $\exists a,b \le n : A[a] + A[b] = A[a+b]$. It is convenient to work with the equivalent question of the existence of $a,b$ such that $A[a+b_h] + A[a+b_\ell] = A[b]$, where $b_h, b_\ell$ are each half the $\lg n$ bits of $b$.

We use again the linear hash function $h$. To prove Lemma \ref{lemma-xxx-cxxx} we hashed to $R = n^{1-\e}$ elements. Now we pick $R := \sqrt{n}$. By the paragraph after Definition \ref{d-hash}, among the $\le n^2$ pairs $a,b$ that do not constitute a solution (i.e., $A[a+b_h] + A[a+b_\ell] \ne A[b])$, we expect $\le n^2/R$ of them to satisfy
$$h(A[a+b_h]) + h(A[a+b_\ell]) = h(A[b]) \hspace{2cm}  (\star).$$

By a Markov argument, with constant probability there are $\le 2 n^2/R = 2 n^{1.5}$ pairs $a,b$ that do not constitute a solution but satisfy $(\star)$. The reduction works in that case. (One can amplify the success probability by repetition.)

We set up a graph with $m := 3 n^{1.5}$ edges where triangles are in an easily-computable $1-1$ correspondence with pairs $a,b$ satisfying ($\star$). We then run the algorithm for listing triangles. For each triangle in the list, we check if it corresponds to a solution for $\cxxx$.
This works because if the triangle-listing algorithm returns as many as $m$ triangles then, by above, at least one triangle corresponds to a correct solution. Hence, if listing can be done in time $m^{4/3-\e}$ then $\cxxx$ can be solved in time $(3n^{1.5})^{4/3-\e} = n^{2-\e'}$.

We now describe the graph. The graph is tripartite. One part has $\sqrt{n}
\times R$ nodes of the form $(b_h,x)$; another has $\sqrt{n} \times R$ nodes of
the form $(b_\ell,y)$; and the last part has $n$ nodes of the form $(a)$. Node
$(a)$ is connected to $(b_h,x)$ if $h(A[a+b_h]) = x$, and to $(b_\ell,y)$ if
$h(A[a+b_\ell]) = y)$.
Nodes $(b_h,x)$ and $(b_\ell,y)$ are connected if, letting $b = b_h + b_\ell$,
$h(A[b]) = x+y$.


We now count the number of edges of the graph. Edges of the form $(a)-(b_h,x)$ (or $(a)-(b_\ell,y)$) number $n^{1.5}$, since $a,b_h$ determine $x$. Edges $(b_h,x)-(b_\ell,y)$ number again $n^{1.5}$, since for each $b = b_h + b_\ell$ and $x$ there is exactly one $y$ yielding an edge.

The aforementioned 1-1 correspondence between solutions to $\cxxx$ and triangles is present by construction.
\end{proof}

\section{Reducing \fcliq~to \sixs}
\label{s-4-clique-6-sum}

In this section we prove the following connection between
solving \fcliq~and \sixs~over the group $\mathbb{Z}_3^t$.
 Although the next lemma is a simple
extension of Lemma \ref{lemma-first}, the fact that the
sum is over $\mathbb{Z}_3^t$ plays a crucial role in our
proof.  We do not see a simple way to carry through the
same reduction over $\mathbb{Z}$ or $\mathbb{Z}_2^t$.

\begin{restatable}{lemma}{4-clique}
\label{lemma-4-clique} Suppose that one can solve $\sixs$
on a set of $n$ elements over $\mathbb{Z}_3^t$ in time
$\tilde O(n^{1+\e})$ for $\e \geq 0$ and $t=O(\log n)$. Then,
given the adjacency list of a graph with $m$ edges,
$n=O(m)$ nodes,  one can decide if it contains a $\fcliq$
in time $\tilde O(m^{1+\e})$.
\end{restatable}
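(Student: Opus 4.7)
The plan is to mimic the reduction of Lemma \ref{lemma-first}, adapted to the 4-clique structure (six edges, four vertices, each vertex in exactly three edges). First, assign each node $a \in V$ a vector $X_a \in \mathbb{Z}_3^t$; for each edge $(a,b) \in E$ define $Y_{(a,b)} := X_a + X_b$ computed coordinate-wise in $\mathbb{Z}_3^t$; run the assumed $\sixs$ algorithm on $Y := \{Y_e : e \in E\}$, and report ``$G$ contains a $\fcliq$'' iff some six elements of $Y$ sum to $0$.

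For completeness, if $\{a,b,c,d\}$ spans a 4-clique, the six corresponding $Y$-values sum to $3(X_a + X_b + X_c + X_d) = 0$ in $\mathbb{Z}_3^t$, since each of the four vertices appears in exactly three of the six edges. This is precisely where working modulo $3$ is essential: over $\mathbb{Z}$ or $\mathbb{Z}_2^t$ the identity fails for generic $X$. For soundness, suppose $\sum_{i=1}^{6} Y_{e_i} = 0$ in $\mathbb{Z}_3^t$ for six distinct edges, and let $c_v$ count how many of those six edges have $v$ as an endpoint, so $\sum_v c_v = 12$. I would pick the $X_v$'s deterministically from the design of Lemma \ref{lemma-design} with $c$ large enough that pairwise intersections fall strictly below $1/11$ of the set size, taking $X_v$ to be the characteristic vector of $S_v$ viewed in $\mathbb{Z}_3^U$ with $U = O(\log m) = O(\log n)$. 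For any one of the at most twelve distinct vertices $v$ involved, the design guarantees a coordinate $j \in S_v$ covered by no other $S_u$ among the involved vertices; the $j$-th coordinate of $\sum_v c_v X_v$ then equals $c_v \bmod 3$, so the hypothesis $\sum Y_{e_i} = 0$ forces $c_v \equiv 0 \pmod{3}$ for every involved vertex.

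A short case analysis on such a $(c_v)$ finishes the argument. Since the graph is simple, $c_v \le 6$, so $c_v \in \{0,3,6\}$ and $\sum c_v = 12$; any configuration containing a $c_v = 6$ puts $v$ in all six edges, whose other endpoints are then six distinct vertices each contributing $c = 1$, contradicting divisibility by $3$. The only surviving configuration is $c_v = 3$ on exactly four vertices, which is precisely a 4-clique. The total running time is $\tilde O(m^{1+\e})$, dominated by the single call to $\sixs$ on the $m$-element set $Y$ with trit-length $t = O(\log n)$; the preprocessing (assigning $X_v$'s from the design and computing the $Y_e$'s) runs in time $\tilde O(m)$ by Lemma \ref{lemma-design}. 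The main obstacle is pinning down the correct algebraic structure --- it must be $\mathbb{Z}_3^t$, so that the ``each vertex appears three times'' pattern collapses to a zero sum --- and then carrying out the case analysis on $(c_v)$ to rule out every spurious six-edge multiset; the analogous reduction over $\mathbb{Z}$ or $\mathbb{Z}_2^t$ does not seem to go through, matching the caveat in the lemma statement.
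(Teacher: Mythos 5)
Your proposal is correct and follows essentially the same route as the paper's proof: the same design-based assignment with pairwise intersections below $1/11$ of the set size, the same edge labels $Y_{(a,b)} = X_a + X_b$ over $\mathbb{Z}_3^t$, and the same soundness argument that each involved vertex has a private coordinate forcing its multiplicity to be $0 \bmod 3$, followed by the case split between all multiplicities equal to $3$ (yielding the $\fcliq$) and a multiplicity of $6$ (ruled out by simplicity of the graph). Your bookkeeping with the counts $c_v$ is, if anything, slightly more explicit than the paper's.
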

\begin{proof}
Similar to the proof of Lemma \ref{lemma-first}, consider
the family $S_i$ of $O(m)$ sets from Lemma
\ref{lemma-design}, with intersection size less than 1/11
of the set size. Assign to node $a$ the number $x_a$
whose decimal representation has 1 in the digits that
belong to $S_a$, and 0 otherwise. We look at $x_a$ as 
an element in $\mathbb{Z}_3^t$. For each edge $e=(a,b)$ 
let $Y_{(a,b)}:= (x_a + x_b)$. Return the output of 
$\sixs$ on the set $Y:=\{Y_{(a,b)}| (a,b) \in
E\}$. If there is a $\fcliq$ in the graph, there are 4
nodes, with an edge between any 2 of them, i.e.,${4
\choose 2}=6$ edges. The elements in $Y$ corresponding to
these 6 edges will sum to 0. This is because every node is
connected to 3 other nodes and the sum is over $\mathbb{Z}_3^t$.

On the other hand, if there are 6 elements $(x_a + x_{a'}),
(x_b + x_{b'}),(x_c + x_{c'}), (x_e + x_{e'}) ,(x_f + 
x_{f'})$ and $(x_g + x_{g'})$ in $Y$ that sum to $0$, 
then each element $x_i$ has to appear a  multiple of 
$3$ times. To see this, note that smaller than 1/11
intersection between any two subsets in $S_i$ assures
that no sum of less than 13 $x_i$ can sum to 0
unless each element appears a multiple of 3 times. If 
each $x_i$ appears exactly 3 times in the sum of 12 
elements, it means we have 4 nodes each one connected 
to the other 3 i.e., we have a $\fcliq$ in the graph. 
Notice if there is an element $x_a$ that appears 6 times, 
since the graph does not have self-loops or multiple 
edges, we can conclude that all the other elements 
are distinct and appear only once. We know that the 
sum of $6$ distinct elements over $\mathbb{Z}_3^t$ cannot 
be zero (due to the properties of $S_i$).
\end{proof}

\end{document}